\documentclass[12pt]{article}
\usepackage[margin=1in]{geometry}
\usepackage{amsmath}
\usepackage{amsthm}
\usepackage{authblk}
\usepackage{adjustbox}
\usepackage{graphicx}
\usepackage{booktabs}
\usepackage{blkarray}
\usepackage[utf8]{inputenc}
\usepackage{csquotes}
\usepackage[backend=biber,style=apa,citestyle=authoryear]{biblatex}
\usepackage{amssymb}
\usepackage{float}
\usepackage{commath} %for norm
\usepackage{mathrsfs}
\usepackage{tikz-cd}
\usepackage{algorithm}
\usepackage{algpseudocode}

\usepackage[english]{babel}

\newtheorem{theorem}{Theorem}[section]

\newtheorem{prop}{Proposition}[section]

\newtheorem{corollary}{Corollary}[theorem]

\title{A stochastic correlation extension of the Vasicek credit risk model}
\author[1]{Dhruv Bansal\thanks{dbban3961@gmail.com}}
\author[2]{Mayank Goud\thanks{mayankgoud344@gmail.com}}
\author[3]{Sourav Majumdar\thanks{souravm@iitk.ac.in}}
\affil[1]{Department of Aerospace Engineering, Indian Institute of Technology Kanpur, India}
\affil[2]{Department of Economic Sciences, Indian Institute of Technology Kanpur, India}
\affil[3]{Department of Management Sciences, Indian Institute of Technology Kanpur, India}
\date{}

\begin{document}

\maketitle

\begin{abstract}
In this paper we extend the Vasicek credit risk model by modelling the correlation as a continuous-time process. The models for correlation follow diffusion processes on the circle. In particular we work with the Circular Brownian motion and a mean-reverting von Mises process. The analytical and computational tractability of these circular diffusion enables us to derive terminal asset and loss distributions by averaging their conditional laws over the law of time-averaged correlation, and evaluate path-dependent probabilities by Monte Carlo simulation. Simulations distinguish terminal joint default, joint survival, first-to-default, and joint first passage, and show how stochastic correlation reallocates probability between concordant and discordant outcomes. We fit both specifications to U.S. bank charge-off data, demonstrating that the framework remains tractable for probabilistic analysis and statistical estimation. \end{abstract}

\noindent\textbf{Keywords:}
Credit risk; stochastic correlation; Vasicek loss distribution; first passage; particle likelihood
\section{Introduction}\label{sec:introduction}

Credit portfolio risk is fundamentally a problem of dependence. Even when marginal default probabilities are well estimated, portfolio tail risk is determined by the clustering of defaults under common economic stress. Two principal approaches that are used to model default risk (\cite{bakshi2022decoding}) are structural models and reduced-form, or intensity-based models. The structural approach originates with \cite{merton1974pricing}. For a portfolio of obligors, \cite{vasicek2002distribution} extends the Merton framework by conditioning default probabilities on a single systematic factor. In the large homogeneous portfolio limit, the resulting default rate has a closed-form distribution, the Vasicek distribution, which permits tractable calculations of portfolio VaR, regulatory capital, and tranche loss probabilities \parencite{vasicek2002distribution}.

This tractability has made the Vasicek limit central to both industry practice and regulation. \cite{gordy2003risk} shows that portfolio-invariant capital charges arise naturally in an asymptotic single-factor model, and the Basel II Internal Ratings-Based (IRB) risk-weight functions are explicitly motivated and explained by this asymptotic single risk factor (ASRF) construction (\cite{BCBS2005IRB}). Portfolio invariance requires that the capital charge for an exposure depend only on that exposure's risk characteristics, rather than on the composition of the remaining portfolio (\cite{gordy2003risk}). Under a single systematic factor, this property emerges in the limit of a large, well-diversified portfolio and leads directly to the Vasicek distribution. The asset correlation parameter then determines the extent of default clustering under stress and, consequently, the relevant tail loss quantiles and regulatory capital charges.

Supervisory calibration therefore depends critically on correlation. The EBA staff paper on the IRB supervisory formula emphasises that asset correlation enters the supervisory risk-weight function directly and affects the transformation of expected probability of default (PD) into stressed, or worst-case, default rates (\cite{CasellinaSalisTessioreUgoccioniVaretto2023IRB}). Since the formula combines PD with asset correlation to produce a stressed default rate, even modest dependence misspecification can materially alter capital requirements.

The classical Vasicek model nevertheless fixes the asset correlation parameter $\rho$ over time. This restriction is difficult to reconcile with the empirical and economic state dependence of default clustering which is that correlations tend to rise under stress, and uncertainty about dependence is itself a source of correlation risk. This state dependence bears directly on the long-recognised cyclicality of Basel-style capital. Simulations in \cite{GoodhartSegoviano2004BaselProcyclicality} show that the IRB approach can generate lower capital requirements during expansions and markedly higher requirements during recessions, with substantially greater cyclical variation than alternative approaches. They conclude that ``procyclicality may well still be a serious problem with Basel II''. Similarly, \cite{CorcosteguiGonzalezMosqueraMarceloTrucharte2003ProcyclicalCapital} argue that the mechanical response of internal ratings-based capital to the business cycle can lower required capital in favourable conditions and raise it in unfavourable conditions, potentially inducing banks to tighten credit during recessions and amplify downturns. Using a rating system estimated over a business cycle, they quantify how macroeconomically driven obligor-grade migrations translate into changes in capital under the Basel proposals.

Even when the ASRF structure is retained, correlation calibration remains a material source of model risk. \cite{TarashevZhu2008ASRFErrors} note that portfolio invariance in the IRB framework rests on the single-factor and perfect-granularity assumptions. Violations of these assumptions may be relatively minor for large portfolios, but plausible calibration errors in both the level and dispersion of asset-return correlations can produce economically significant errors in portfolio credit-risk measures and capital. A fixed correlation input therefore omits an important source of risk.

The instability of a single correlation input is also visible in empirical estimates. \cite{DuellmannScheicherSchmieder2007AssetCorrelation} document substantial variation in estimated asset correlations across portfolios and show that averaging correlation parameters can materially understate portfolio VaR. In the Basel II setting, \cite{Lopez2002AssetCorrelationPDSize} emphasises that the ASRF framework is integral to regulatory capital and that average asset correlation is the key regulatory parameter linking obligor-level PDs to portfolio tail risk. More recently, \cite{LeeLinYang2011ProcyclicalCorrelation} find asymmetric and procyclical behaviour in asset correlations: correlations rise in downturns, are negatively related to default probability, and are positively related to firm size. A constant-correlation specification may therefore suppress systematic dependence variation precisely when tail clustering matters most.

Previous work in the credit-risk literature have considered aspects that have been allowed to evolve over time. \textcite{taufer2007modelling} models aggregate default rates by a mean-reverting diffusion with a given marginal distribution and autocorrelation structure. \textcite{czado2008modeling} study dependence in default probabilities across both time and rating categories by incorporating macroeconomic covariates. In a structural setting, \textcite{zhang2009first} develop Monte Carlo methods for multivariate first-passage problems and correlated defaults under jump-diffusion asset dynamics.  \textcite{miao2019extending} introduce common credit events with persistent effects to strengthen joint-default dependence and price multi-name products. \textcite{hocht2010pricing} combine stochastic recovery and default intensity in a tractable framework for credit derivatives. 

The limitations of a single correlation parameter are not confined to regulatory credit models. They also arise in financial models built on copula dependence, where effective correlation may change across states, maturities, and market regimes. This has led to stochastic and local correlation specifications. In credit derivatives, random factor loadings and stochastic or local correlation have been used to reproduce tranche and skew behaviour and to capture stronger default clustering under stress; see, e.g., \cite{andersen2004extensions,burtschell2007beyond,hull2005valuation}. From a risk-measurement perspective, random correlation replaces a single Vasicek loss distribution with a mixture over correlation states. Such mixtures typically thicken the tails and make extreme quantiles more sensitive to dependence.

Existing stochastic correlation models draw on several constructions. These include bounded functionals of Brownian motions (\cite{teng2016modelling,teng2016heston}), Wishart processes (\cite{asai2009structure,fonseca2007option,da2014estimating,da2011riding,da2011hedging,gnoatto2014explicit,grasselli2016flexible,da2015analytic}), and Jacobi processes (\cite{ma2009pricing}). However, many of them do not have transition densities available in closed form

For regulatory applications any extension must also remain operationally tractable. In the ASRF framework, idiosyncratic risk is diversified away and a single systematic factor drives dependence, which allows capital charges to be expressed analytically. As \cite{GordyLuetkebohmert2013GranularityAdjustment} stress the importance of having simple closed-form capital rule because of the transparencey they give across institutions. Yet concentration risk and correlation structure are precisely the dimensions along which actual portfolios depart from the ASRF benchmark. The Basel Committee's Research Task Force observes that name concentration violates perfect granularity, whereas sector concentration may require more than one systematic factor \cite{BCBS2006ConcentrationWP15}. The same report identifies stable and reliable estimation of asset correlations across exposures as a central practical difficulty. These constraints call for a dependence extension that remains analytically transparent and computationally efficient.

We impose the ASRF correlation bounds through a circular latent state. For a large homogeneous one-factor portfolio, nonnegative equicorrelation is required for positive semidefiniteness at arbitrary portfolio size. We therefore set
\[
R_t = \cos^2(\varphi_t), \qquad \varphi_t \in \mathbb{S}^1,
\]
where \(\varphi_t\) is a diffusion on the circle. The parametrization \(R_t=\cos^2(\varphi_t)\) also reflects the
correlation restrictions of the ASRF limit. In a homogeneous
\(n\)-obligor portfolio, a common pairwise correlation \(\rho\) is
admissible only if
\[
-\frac{1}{n-1}\leq \rho\leq 1.
\]
The lower bound converges to zero as \(n\to\infty\). Consequently, a
common correlation parameter that remains admissible for the
asymptotically fine-grained portfolios considered by
\textcite{gordy2003risk} must lie in \([0,1]\). The transformation
\(R_t=\cos^2(\varphi_t)\) enforces this restriction pathwise.
This construction accommodates both Circular Brownian motion and the mean-reverting von Mises process \cite{majumdar2024diffusion}. It also separates the credit quantities determined by time-averaged correlation from those that depend on the entire correlation path. We characterize terminal asset and loss distributions through time-averaged correlation, evaluate barrier events by full-path simulation, and measure the resulting departures from constant-correlation Vasicek benchmarks.

Section~\ref{sec:model} presents the model, expresses terminal asset and loss distributions in terms of time-averaged correlation, and develops additive-functional approximations. Section~\ref{sec:barrier-survival} gives full-path Monte Carlo estimators for joint survival, first-to-default, and joint first passage. Section~\ref{sec:simulation} reports benchmark and controlled parameter experiments. Section~\ref{sec:empirical} fits Circular Brownian motion and the von Mises process to U.S.\ bank charge-off rates by particle-filter simulated maximum likelihood under the exact Vasicek loss density, and maps the two filtered dependence distributions into conditional two-year structural probabilities. Section~\ref{sec:conclusion} concludes with limitations and directions for future work.\section{Model}\label{sec:model}

Let $S_{i,t}$ be the asset value of obligor $i$ and,
\[
X_{i,t}=\log S_{i,t}, \qquad x_{i,0}=\log S_{i,0}.
\]
We work with the following log-asset dynamics, written directly in one-factor ASRF form.  
\begin{equation}\label{eq:corrected-log-asset-dynamics}
dX_{i,t}
=
m_i\,dt
+
\sigma_i\left(\sqrt{R_t}\,dB_{0,t}+\sqrt{1-R_t}\,dB_{i,t}\right),
\qquad i=1,\ldots,n,
\end{equation}
where
\[
m_i=\mu_i-\frac12\sigma_i^2,
\]
$B_{0,t},B_{1,t},\ldots,B_{n,t}$ are independent Brownian motions, and $R_t\in[0,1]$ is the instantaneous ASRF asset correlation process.  Thus $B_{0,t}$ is the systematic Brownian factor and $B_{i,t}$ is the idiosyncratic Brownian factor of obligor $i$.

The dependence process is generated by a circular diffusion.  Let $\varphi_t$ be a diffusion on the circle $\mathbb S^1$, independent of $B_{0,t},B_{1,t},\ldots,B_{n,t}$, with generator
\begin{equation}\label{eq:angle-generator}
\mathcal A_\varphi f(\varphi)=a(\varphi)f'(\varphi)+\frac12\eta^2(\varphi)f''(\varphi),
\end{equation}
with periodic boundary conditions.  We set
\begin{equation}\label{eq:R-cos2}
R_t=R(\varphi_t)=\cos^2(\varphi_t),
\end{equation}
Two useful processes to model $\varphi_t$ are, see \cite{majumdar2024diffusion},
\begin{enumerate}
\item Circular Brownian motion
\[
d\varphi_t=\sigma_\varphi dB_t
\]
\item the von Mises process
\[
d\varphi_t=-\lambda\sin(\varphi_t-\mu_\varphi)dt+
\sigma_\varphi dB_t.
\]
\end{enumerate}
Here $B_t$ is a Brownian motion independent of the asset Brownian motions.  The role of the circular diffusion is to provide a bounded and interpretable state variable for dependence.  In the ASRF setting we use the nonnegative mapping $R_t=\cos^2(\varphi_t)$ rather than $\rho_t=\cos(\theta_t)$, because the large homogeneous one-factor portfolio model requires nonnegative equicorrelation in the asymptotic credit-portfolio interpretation.

\begin{prop}\label{prop:psd-correlation}
Consider a finite portfolio of \(n\ge2\) obligors with log-asset processes
\(X_1,\ldots,X_n\) satisfying \eqref{eq:corrected-log-asset-dynamics}.
Assume that \(\sigma_i>0\) for each \(i\). Let
\(A_t^{(n)}=(a_{ij}(t))_{1\le i,j\le n}\) denote the instantaneous
covariance matrix, defined by $d\langle X_i,X_j\rangle_t=a_{ij}(t)dt$.
Under the specification \(R_t=\cos^2(\varphi_t)\),
\[
a_{ij}(t)=
\begin{cases}
\sigma_i^2, & i=j\\
\sigma_i\sigma_jR_t, & i\neq j.
\end{cases}
\]
Consequently, the normalized instantaneous correlation matrix
\(C_t^{(n)}=(c_{ij}(t))_{1\le i,j\le n}\), where
\[
c_{ij}(t)=\frac{a_{ij}(t)}{\sqrt{a_{ii}(t)a_{jj}(t)}},
\]
has entries
\[
c_{ij}(t)=
\begin{cases}
1, & i=j,\\
R_t, & i\neq j.
\end{cases}
\]
Since \(R_t\in[0,1]\), \(C_t^{(n)}\) is positive semidefinite for every
\(n\ge2\).
\end{prop}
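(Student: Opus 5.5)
The plan is to compute the quadratic covariations directly from \eqref{eq:corrected-log-asset-dynamics} and then show positive semidefiniteness through an explicit decomposition of the resulting correlation matrix.

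\textbf{Step 1: the covariance matrix.} The martingale part of $X_i$ is $\sigma_i\sqrt{R_t}\,dB_{0,t}+\sigma_i\sqrt{1-R_t}\,dB_{i,t}$. The drift $m_i\,dt$ contributes nothing to the bracket. Since $B_0,B_1,\ldots,B_n$ are independent, we have $d\langle B_0,B_i\rangle_t=0$ and, for $i\neq j$, $d\langle B_i,B_j\rangle_t=0$. Bilinearity of the bracket then gives the two cases:
\[
d\langle X_i\rangle_t=\sigma_i^2\bigl(R_t+(1-R_t)\bigr)dt=\sigma_i^2\,dt,
\qquad
d\langle X_i,X_j\rangle_t=\sigma_i\sigma_j R_t\,dt \quad (i\neq j).
\]
The independence of $\varphi$ from the asset Brownian motions is not needed here. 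We only need $R_t$ to be an adapted process with values in $[0,1]$, so that $\sqrt{R_t}$ and $\sqrt{1-R_t}$ are well defined and bounded and the stochastic integrals make sense. This holds because $R_t=\cos^2(\varphi_t)$ is continuous in $t$.

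\textbf{Step 2: normalisation.} Because $\sigma_i>0$, the diagonal entries $a_{ii}=\sigma_i^2$ are strictly positive, so $c_{ij}$ is well defined. Dividing by $\sigma_i\sigma_j$ gives $c_{ii}=1$ and $c_{ij}=R_t$ for $i\neq j$.

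\textbf{Step 3: positive semidefiniteness.} Write
\[
C_t^{(n)}=(1-R_t)I_n+R_t\,\mathbf{1}\mathbf{1}^\top .
\]
For any $x\in\mathbb R^n$ this gives
\[
x^\top C_t^{(n)}x=(1-R_t)\|x\|^2+R_t\Bigl(\sum_i x_i\Bigr)^2 .
\]
Both terms are nonnegative because $R_t=\cos^2(\varphi_t)\in[0,1]$. As a cross-check, the eigenvalues are $1+(n-1)R_t$, with eigenvector $\mathbf 1$, and $1-R_t$, with multiplicity $n-1$; both are nonnegative for every $n$. The same conclusion also follows from $A_t^{(n)}=DC_t^{(n)}D$ with $D=\mathrm{diag}(\sigma_i)$.

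There is no real obstacle in this argument. The only point needing care is the formal justification in Step 1 that cross-variations of independent Brownian motions vanish and that the bracket is bilinear with adapted bounded integrands. I would state this using the standard identity $d\langle \int H\,dB,\int K\,dW\rangle=HK\,d\langle B,W\rangle$.
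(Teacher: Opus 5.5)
Your proposal is correct and follows the paper's proof essentially step for step: the same bracket computation using the independence of $B_0,B_1,\ldots,B_n$, the same normalization by $\sigma_i\sigma_j$, and the same quadratic-form identity $x^\top C_t^{(n)}x=(1-R_t)\sum_i x_i^2+R_t\bigl(\sum_i x_i\bigr)^2$. Your eigenvalue cross-check ($1+(n-1)R_t$ once and $1-R_t$ with multiplicity $n-1$) and the remark $A_t^{(n)}=DC_t^{(n)}D$ are correct but not needed for the argument.
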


\begin{proof}
Since \(R_t=\cos^2(\varphi_t)\), we have \(R_t\in[0,1]\). From
\eqref{eq:corrected-log-asset-dynamics},
\[
dX_{i,t}
=
m_i\,dt+\sigma_i\sqrt{R_t}\,dB_{0,t}
+\sigma_i\sqrt{1-R_t}\,dB_{i,t}.
\]
Using the independence of \(B_0,B_1,\ldots,B_n\), for \(i\neq j\),
\[
d\langle X_i,X_j\rangle_t
=
\sigma_i\sigma_jR_t\,dt,
\]
while
\[
d\langle X_i,X_i\rangle_t
=
\sigma_i^2R_t\,dt+\sigma_i^2(1-R_t)\,dt
=
\sigma_i^2\,dt.
\]
Thus
\[
a_{ij}(t)=
\begin{cases}
\sigma_i^2, & i=j,\\
\sigma_i\sigma_jR_t, & i\neq j.
\end{cases}
\]
After normalization,
\[
c_{ij}(t)
=
\frac{a_{ij}(t)}{\sqrt{a_{ii}(t)a_{jj}(t)}}
=
\begin{cases}
1, & i=j,\\
R_t, & i\neq j.
\end{cases}
\]

It remains to verify positive semidefiniteness of \(C_t^{(n)}\).  Let
\(x=(x_1,\ldots,x_n)^\top\in\mathbb R^n\).  Then
\[
\begin{aligned}
x^\top C_t^{(n)}x
&=
\sum_{i=1}^n x_i^2
+
2R_t\sum_{1\le i<j\le n}x_ix_j \\
&=
(1-R_t)\sum_{i=1}^n x_i^2
+
R_t\left(\sum_{i=1}^n x_i\right)^2 .
\end{aligned}
\]
Both terms on the right-hand side are nonnegative because \(R_t\in[0,1]\).
Hence
\[
x^\top C_t^{(n)}x\ge0
\qquad\text{for all }x\in\mathbb R^n.
\]
Therefore,
\(C_t^{(n)}\) is positive semidefinite.
\end{proof}

\subsection{Vasicek loss density under stochastic correlation}
\label{subsec:corrected-vasicek-density}

We first isolate the Vasicek loss density kernel. This kernel will later be
used as the observation density for estimating effective dependence from
portfolio loss data.

\begin{prop}\label{prop:vasicek-kernel}
Let \(p\in(0,1)\) denote the unconditional default probability, and $q:=\Phi^{-1}(p)$,
where \(\Phi\) is the standard normal distribution function. Let
\(\rho\in(0,1)\) denote the asset-correlation parameter. Let \(Y\) be the
systematic standard normal factor and let
\(\varepsilon_1,\ldots,\varepsilon_n\) be idiosyncratic standard normal
variables, independent of \(Y\) and of each other. The default
indicator of obligor \(i\) is defined by
\[
D_i(\rho)
=
\mathbf 1\{\sqrt \rho Y+\sqrt{1-\rho}\varepsilon_i\le q\},
\qquad i=1,\ldots,n.
\]
Let
\[
L_n(\rho)=\sum_{i=1}^n w_{i,n}D_i(\rho),
\qquad
\sum_{i=1}^n w_{i,n}=1,
\qquad
\sum_{i=1}^n w_{i,n}^2\to0.
\]
Then
\[
L_n(\rho)\to
L(\rho)
=
\Phi\left(\frac{q-\sqrt \rho Y}{\sqrt{1-\rho}}\right)
\]
in \(L^2\), and the limiting loss fraction has density
\begin{equation}\label{eq:vasicek-density-kernel}
\begin{aligned}
f(x;p,\rho)
&=
\sqrt{\frac{1-\rho}{\rho}}
\\
&\quad{}\times
\exp\left\{
-\frac{
\left(\sqrt{1-\rho}\Phi^{-1}(x)-\Phi^{-1}(p)\right)^2
}{2\rho}
+\frac12\left(\Phi^{-1}(x)\right)^2
\right\},
\qquad 0<x<1.
\end{aligned}
\end{equation}
Moreover, \(\mathbb E[L(\rho)]=p\).
\end{prop}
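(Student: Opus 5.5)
The plan is to condition on the systematic factor \(Y\) and treat the claim in three parts: the \(L^2\) limit, the density by change of variables, and the mean identity.

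\emph{Step 1: the \(L^2\) limit.} Conditionally on \(Y\), the indicators \(D_i(\rho)\) are i.i.d. Bernoulli with success probability
\[
P(D_i=1\mid Y)=P\!\left(\varepsilon_i\le \frac{q-\sqrt\rho Y}{\sqrt{1-\rho}}\right)=L(\rho).
\]
Since \(\sum_i w_{i,n}=1\), we have \(\mathbb E[L_n(\rho)\mid Y]=L(\rho)\). By conditional independence,
\[
\mathbb E\big[(L_n-L)^2\mid Y\big]=\sum_i w_{i,n}^2\,L(1-L)\le \tfrac14\sum_i w_{i,n}^2 .
\]
Taking expectations gives \(\mathbb E[(L_n-L)^2]\to0\).

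\emph{Step 2: the density.} I would write \(L(\rho)=g(Y)\) with \(g(y)=\Phi\big((q-\sqrt\rho y)/\sqrt{1-\rho}\big)\). This map is strictly decreasing and smooth from \(\mathbb R\) onto \((0,1)\). Its inverse is
\[
y(x)=\frac{q-\sqrt{1-\rho}\,\Phi^{-1}(x)}{\sqrt\rho}.
\]
By the change of variables formula, \(f(x)=\phi(y(x))\,|y'(x)|\). Using \((\Phi^{-1})'(x)=1/\phi(\Phi^{-1}(x))\), we get
\[
|y'(x)|=\sqrt{\tfrac{1-\rho}{\rho}}\;\frac{1}{\phi(\Phi^{-1}(x))}=\sqrt{\tfrac{1-\rho}{\rho}}\;\sqrt{2\pi}\,e^{(\Phi^{-1}(x))^2/2}.
\]
Multiplying by \(\phi(y(x))=(2\pi)^{-1/2}\exp\{-(\sqrt{1-\rho}\,\Phi^{-1}(x)-q)^2/(2\rho)\}\), the \(\sqrt{2\pi}\) factors cancel. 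Here I used that the square is unchanged by the sign flip. The result is exactly \eqref{eq:vasicek-density-kernel}.

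\emph{Step 3: the mean.} By the tower property,
\[
\mathbb E[L(\rho)]=\mathbb E\big[P(\sqrt\rho Y+\sqrt{1-\rho}\,\varepsilon_1\le q\mid Y)\big]=P(Z\le q),
\]
where \(Z=\sqrt\rho Y+\sqrt{1-\rho}\,\varepsilon_1\sim N(0,1)\). Its variance is \(\rho+(1-\rho)=1\). Hence \(\mathbb E[L(\rho)]=\Phi(q)=p\).

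\emph{Main obstacle.} There is no deep obstacle here. The step needing the most care is the Jacobian bookkeeping in Step 2, namely getting the monotonicity sign right and tracking the \(\tfrac12(\Phi^{-1}(x))^2\) term that comes from differentiating \(\Phi^{-1}\). In Step 1 one must also make sure the bound \(L(1-L)\le 1/4\) is used, so that convergence holds uniformly over \(Y\) without any integrability concerns.
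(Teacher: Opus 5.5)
Your proposal is correct and follows the paper's proof essentially step for step. The same conditional-variance bound $\tfrac14\sum_i w_{i,n}^2$ gives the $L^2$ limit, and the mean follows from the tower property, where you helpfully make explicit that $\sqrt\rho\,Y+\sqrt{1-\rho}\,\varepsilon_1\sim N(0,1)$ justifies $\mathbb P(D_i(\rho)=1)=p$. For the density you apply the monotone change-of-variables formula directly, whereas the paper first writes $F_{L\mid\rho}(x)=\Phi\bigl((\sqrt{1-\rho}\,\Phi^{-1}(x)-q)/\sqrt\rho\bigr)$ and then differentiates, which is the same computation.
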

\begin{proof}
Conditional on \(Y=y\),
\[
\mathbb P(D_i(\rho)=1\mid Y=y)
=
\Phi\left(\frac{q-\sqrt \rho y}{\sqrt{1-\rho}}\right)
:=l(y,\rho).
\]
Hence
\[
\mathbb E[L_n(\rho)\mid Y]=l(Y,\rho),
\]
and, by conditional independence,
\[
\operatorname{Var}(L_n(\rho)\mid Y)
=
\sum_{i=1}^n w_{i,n}^2l(Y,\rho)(1-l(Y,\rho))
\le
\frac14\sum_{i=1}^n w_{i,n}^2.
\]
Therefore
\[
\mathbb E\left[(L_n(\rho)-l(Y,\rho))^2\right]
\le
\frac14\sum_{i=1}^n w_{i,n}^2
\to0,
\]
so \(L_n(\rho)\to L(\rho)=l(Y,\rho)\) in \(L^2\).

Let \(z_x=\Phi^{-1}(x)\). Since
\[
L(\rho)\le x
\quad\Longleftrightarrow\quad
Y\ge
\frac{q-\sqrt{1-\rho}\,z_x}{\sqrt \rho},
\]
we have
\[
F_{L\mid \rho}(x)
=
\Phi\left(
\frac{\sqrt{1-\rho}\,\Phi^{-1}(x)-q}{\sqrt \rho}
\right).
\]
After differentiating we get, with $\phi$ denoting the standard normal density,
\[
f_{L\mid \rho}(x)
=
\sqrt{\frac{1-\rho}{\rho}}
\frac{
\phi\left((\sqrt{1-\rho}\,\Phi^{-1}(x)-q)/\sqrt \rho\right)
}{
\phi(\Phi^{-1}(x))
},
\]
which is \eqref{eq:vasicek-density-kernel} after substituting
\(q=\Phi^{-1}(p)\). Finally,
\[
\mathbb E[L(\rho)]
=
\mathbb E[\mathbb P(D_i(\rho)=1\mid Y)]
=\mathbb E[\mathbb E[D_i(\rho)\mid Y]] 
=\mathbb E[D_i(\rho)]
=
\mathbb P(D_i(\rho)=1)
=
p.
\]
\end{proof}

\begin{prop}[Vasicek density with stochastic correlation]
\label{prop:vasicek-density-stochastic-correlation}
Let the log price follow
\eqref{eq:corrected-log-asset-dynamics}, with $R_t=\cos^2(\varphi_t)$.
For a fixed horizon \(T>0\), define the time-averaged correlation
\[
\overline R_T=\frac1T\int_0^T R_s\,ds.
\]
Conditional on the correlation path \(R_{[0,T]}\), the limiting portfolio
loss fraction at horizon \(T\) has density
\begin{equation}\label{eq:conditional-vasicek-average}
f_{L_T\mid R_{[0,T]}}(x\mid p,R_{[0,T]})
=
f(x;p,\overline R_T),
\qquad 0<x<1,
\end{equation}
where \(f\) is the limiting loss fraction density in
\eqref{eq:vasicek-density-kernel}. Consequently, the unconditional loss
density is
\begin{equation}\label{eq:unconditional-vasicek-mixture}
f_{L_T}(x\mid p)
=
\mathbb E\left[f(x;p,\overline R_T)\right].
\end{equation}
If \(\overline R_T\) has density \(h_T\) on \((0,1)\), then
\begin{equation}\label{eq:vasicek-density-mixture-integral}
f_{L_T}(x\mid p)
=
\int_0^1 f(x;p,\rho)h_T(\rho)d\rho.
\end{equation}
\end{prop}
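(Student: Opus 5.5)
Conditioning on the correlation path turns the problem into a constant-correlation Vasicek model with correlation $\overline R_T$, after which Proposition~\ref{prop:vasicek-kernel} applies directly. Because $\varphi$ is independent of $B_0,B_1,\ldots,B_n$, conditioning on $\mathcal G=\sigma(R_s:0\le s\le T)$ leaves $B_0,\ldots,B_n$ independent Brownian motions. Under this conditioning the integrands $\sqrt{R_s}$ and $\sqrt{1-R_s}$ are deterministic and bounded.

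\emph{Step 1: conditional Gaussian representation.} Integrating \eqref{eq:corrected-log-asset-dynamics} gives
\[
X_{i,T}-x_{i,0}-m_iT=\sigma_i\int_0^T\sqrt{R_s}\,dB_{0,s}+\sigma_i\int_0^T\sqrt{1-R_s}\,dB_{i,s}.
\]
Conditional on $\mathcal G$, these are Wiener integrals of deterministic functions, so they are centered Gaussian. By the It\^o isometry their variances are $T\overline R_T$ and $T(1-\overline R_T)$. For $i\ge1$ they are mutually independent and independent of the systematic term. Define
\[
Y=\frac{\int_0^T\sqrt{R_s}\,dB_{0,s}}{\sqrt{T\overline R_T}},\qquad \varepsilon_i=\frac{\int_0^T\sqrt{1-R_s}\,dB_{i,s}}{\sqrt{T(1-\overline R_T)}}.
\]
Conditionally on $\mathcal G$, the variables $Y,\varepsilon_1,\ldots,\varepsilon_n$ are i.i.d. standard normal, and
\[
\frac{X_{i,T}-x_{i,0}-m_iT}{\sigma_i\sqrt T}=\sqrt{\overline R_T}\,Y+\sqrt{1-\overline R_T}\,\varepsilon_i .
\]

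\emph{Step 2: reduction to the kernel.} Default at $T$ means $X_{i,T}$ falls below a barrier $\log K_i$. This is the event $\sqrt{\overline R_T}Y+\sqrt{1-\overline R_T}\varepsilon_i\le q_i$ with $q_i=(\log K_i-x_{i,0}-m_iT)/(\sigma_i\sqrt T)$. In the homogeneous setting of Proposition~\ref{prop:vasicek-kernel}, calibrate $q=\Phi^{-1}(p)$. This works because, conditional on $\mathcal G$, the standardized variable is exactly $N(0,1)$ whatever the path, so $\mathbb P(D_i=1\mid\mathcal G)=\Phi(q)=p$, and $p$ is also the unconditional default probability. The conditional problem is then exactly the setting of Proposition~\ref{prop:vasicek-kernel} with $\rho=\overline R_T$. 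The $L^2$ convergence argument there holds conditionally, giving $L_T=\Phi\bigl((q-\sqrt{\overline R_T}Y)/\sqrt{1-\overline R_T}\bigr)$. Its conditional density is $f(x;p,\overline R_T)$, which is \eqref{eq:conditional-vasicek-average}.

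\emph{Step 3: mixing.} The law of $L_T$ is the $\mathcal G$-mixture of these conditional laws. For any Borel $A\subset(0,1)$,
\[
\mathbb P(L_T\in A)=\mathbb E\Bigl[\int_A f(x;p,\overline R_T)\,dx\Bigr]=\int_A \mathbb E[f(x;p,\overline R_T)]\,dx
\]
by Tonelli, since $f\ge0$. This proves \eqref{eq:unconditional-vasicek-mixture}. Writing the expectation against the density $h_T$ gives \eqref{eq:vasicek-density-mixture-integral}.

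\emph{Main obstacle.} The kernel requires $\rho\in(0,1)$, so the main obstacle is ruling out the degenerate values $\overline R_T\in\{0,1\}$. This holds almost surely for both circular diffusions. $R_s=\cos^2\varphi_s$ takes the values $0$ and $1$ only at isolated angles. A nondegenerate diffusion on $\mathbb S^1$ spends Lebesgue-null time at any point and is not constant on $[0,T]$, so $0<\overline R_T<1$ a.s. The second technical point is that the conditioning in Step 1 is legitimate. I would justify it via a regular conditional distribution: by independence, fix a deterministic path $r_{[0,T]}$, compute everything for deterministic $r$, and then substitute the random path.
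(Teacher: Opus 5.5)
Your proposal is correct and follows essentially the same route as the paper. You condition on the correlation path, use the It\^o isometry to obtain conditional variances $\overline R_T$ and $1-\overline R_T$ for the independent systematic and idiosyncratic terms, invoke Proposition~\ref{prop:vasicek-kernel} with $\rho=\overline R_T$, and then average over the path. Your explicit construction of $Y,\varepsilon_i$, the Tonelli step, and the check that $0<\overline R_T<1$ almost surely add rigor the paper leaves implicit; the last point is needed for the kernel to apply.
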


\begin{proof}
Conditional on the correlation path \(R_{[0,T]}\), integration of
\eqref{eq:corrected-log-asset-dynamics} gives
\[
\frac{X_{i,T}-X_{i,0}-m_iT}{\sigma_i\sqrt T}
=
\frac1{\sqrt T}\int_0^T\sqrt{R_s}\,dB_{0,s}
+
\frac1{\sqrt T}\int_0^T\sqrt{1-R_s}\,dB_{i,s}.
\]
The first term is common across obligors. By Itô isometry,
\[
\operatorname{Var}\left(
\frac1{\sqrt T}\int_0^T\sqrt{R_s}\,dB_{0,s}
\,\middle|\,R_{[0,T]}
\right)
=
\frac1T\int_0^T R_s\,ds
=
\overline R_T.
\]
Similarly,
\[
\operatorname{Var}\left(
\frac1{\sqrt T}\int_0^T\sqrt{1-R_s}\,dB_{i,s}
\,\middle|\,R_{[0,T]}
\right)
=
\frac1T\int_0^T(1-R_s)\,ds
=
1-\overline R_T.
\]
The two terms are conditionally independent, since they are driven by
independent Brownian motions.

Hence, conditional on \(R_{[0,T]}\), the normalized log-asset return has the
same distributional form as the latent variable in
Proposition~\ref{prop:vasicek-kernel}, with the correlation parameter
\(\rho=\overline R_T\). Applying Proposition~\ref{prop:vasicek-kernel} gives
\[
f_{L_T\mid R_{[0,T]}}(x\mid p,R_{[0,T]})
=
f(x;p,\overline R_T),
\qquad 0<x<1.
\]

Taking expectations over the correlation path gives
\[
f_{L_T}(x\mid p)
=
\mathbb E\left[f(x;p,\overline R_T)\right].
\]
If \(\overline R_T\) has density \(h_T\) on \((0,1)\), then conditioning on
\(\overline R_T\) yields
\[
f_{L_T}(x\mid p)
=
\int_0^1 f(x;p,\rho)h_T(\rho)\,d\rho.
\]
\end{proof}

\subsection{Terminal joint default at horizon}\label{subsec:terminal-joint}

The next result replaces the terminal-correlation calculation by the correct stochastic-integral calculation.  The proof uses standard It\^o-isometry and conditional-Gaussian arguments for stochastic integrals \cite{karatzas1991brownian,oksendal2003stochastic}.  The relevant quantity is the time-averaged correlation $\overline R_t=t^{-1}\int_0^tR_sds$, not the terminal instantaneous value $R_t$.

\begin{theorem}[Joint asset-value density under stochastic correlation]
\label{thm:joint-dist-corrected}
Let \(X_{i,t}=\log S_{i,t}\), \(i=1,2\), follow
\eqref{eq:corrected-log-asset-dynamics}, with
\(R_t=\cos^2(\varphi_t)\). Let \(\overline R_T\) be the time-averaged
correlation, and assume
that \(\overline R_T<1\) almost surely.

Conditional on the correlation path \(R_{[0,T]}\), the vector
\((X_{1,T},X_{2,T})^\top\) is bivariate normal with mean vector
\[
\begin{pmatrix}
x_{1,0}+m_1T\\
x_{2,0}+m_2T
\end{pmatrix}
\]
and covariance matrix
\begin{equation}\label{eq:conditional-covariance-corrected}
\Sigma_T(R)
=
T
\begin{pmatrix}
\sigma_1^2
&
\sigma_1\sigma_2\overline R_T\\
\sigma_1\sigma_2\overline R_T
&
\sigma_2^2
\end{pmatrix}.
\end{equation}
In particular,
\[
\operatorname{Corr}
\left(
X_{1,T},X_{2,T}\mid R_{[0,T]}
\right)
=
\overline R_T.
\]

Consequently, conditional on \(R_{[0,T]}\), the joint density of
\((S_{1,T},S_{2,T})\) is
\begin{equation}\label{eq:conditional-lognormal-density-corrected}
f_{S_{1,T},S_{2,T}\mid R_{[0,T]}}
(s_1,s_2\mid R_{[0,T]})
=
g_T(s_1,s_2;\overline R_T),
\qquad s_1,s_2>0,
\end{equation}
where
\begin{equation}\label{eq:bivariate-lognormal-density}
g_T(s_1,s_2;\rho)
=
\frac{1}{
2\pi\sigma_1\sigma_2T s_1s_2\sqrt{1-\rho^2}
}
\exp\left[
-\frac{z_1^2-2\rho z_1z_2+z_2^2}
{2(1-\rho^2)}
\right],
\qquad 0\le \rho<1,
\end{equation}
with
\[
z_i
=
\frac{\log(s_i/S_{i,0})-m_iT}{\sigma_i\sqrt T},
\qquad i=1,2.
\]

The unconditional joint density is
\begin{equation}\label{eq:unconditional-joint-lognormal-mixture}
f_{S_{1,T},S_{2,T}}(s_1,s_2)
=
\mathbb E\left[
g_T(s_1,s_2;\overline R_T)
\right].
\end{equation}
If \(\overline R_T\) has density \(h_T\) on \((0,1)\), then
\begin{equation}\label{eq:joint-lognormal-integral-average}
f_{S_{1,T},S_{2,T}}(s_1,s_2)
=
\int_0^1
g_T(s_1,s_2;\rho)h_T(\rho)\,d\rho.
\end{equation}
\end{theorem}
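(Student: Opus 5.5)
The plan is to condition on the whole correlation path and exploit the independence of \(\varphi\) from \((B_0,B_1,B_2)\). Integrating \eqref{eq:corrected-log-asset-dynamics} gives
\[
X_{i,T}=x_{i,0}+m_iT+\sigma_i\int_0^T\sqrt{R_s}\,dB_{0,s}+\sigma_i\int_0^T\sqrt{1-R_s}\,dB_{i,s},\qquad i=1,2.
\]
Because \(\varphi\) is independent of the asset Brownian motions, conditioning on \(R_{[0,T]}\) leaves \(B_0,B_1,B_2\) as independent Brownian motions. Under this conditioning the integrands \(\sqrt{R_s}\) and \(\sqrt{1-R_s}\) are deterministic, bounded and measurable.

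Each stochastic integral is therefore a Wiener integral of a deterministic function, so it is centred Gaussian. The triple
\[
\Bigl(\int_0^T\sqrt{R_s}\,dB_{0,s},\ \int_0^T\sqrt{1-R_s}\,dB_{1,s},\ \int_0^T\sqrt{1-R_s}\,dB_{2,s}\Bigr)
\]
is then jointly Gaussian with independent components, and \((X_{1,T},X_{2,T})\) is a linear image of it plus a constant. Hence the vector is conditionally bivariate normal with the stated mean.

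The covariance comes from Itô isometry, exactly as in the proof of Proposition~\ref{prop:vasicek-density-stochastic-correlation}. The variances are \(\sigma_i^2\int_0^T(R_s+1-R_s)\,ds=\sigma_i^2T\). For the cross term, the idiosyncratic pieces are independent of each other and of the common factor, so only the common part contributes: \(\sigma_1\sigma_2\int_0^TR_s\,ds=\sigma_1\sigma_2T\overline R_T\). This yields \eqref{eq:conditional-covariance-corrected} and the conditional correlation \(\overline R_T\).

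The lognormal density follows in three steps.
\begin{itemize}
\item Since \(\overline R_T<1\) a.s., \(\Sigma_T(R)\) is nonsingular a.s., with determinant \(\sigma_1^2\sigma_2^2T^2(1-\overline R_T^2)\).
\item Write the bivariate normal density of \((X_{1,T},X_{2,T})\) in the standardized variables \(z_i\).
\item Change variables \(s_i=e^{x_i}\); the Jacobian \(1/(s_1s_2)\) gives \(g_T(s_1,s_2;\overline R_T)\).
\end{itemize}

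For the unconditional statement, I would take any Borel set \(A\subset(0,\infty)^2\) and use the tower property:
\[
\mathbb P((S_{1,T},S_{2,T})\in A)=\mathbb E\Bigl[\int_A g_T(s;\overline R_T)\,ds\Bigr].
\]
Since \(g_T\ge0\), Tonelli lets us swap expectation and integral, which identifies \(\mathbb E[g_T(s;\overline R_T)]\) as the density, i.e.\ \eqref{eq:unconditional-joint-lognormal-mixture}. The conditional density depends on the path only through \(\overline R_T\), so when \(\overline R_T\) has density \(h_T\) the expectation becomes \eqref{eq:joint-lognormal-integral-average}.

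The main obstacle is making the conditioning step rigorous: we need that, given \(R_{[0,T]}\), the processes \(B_0,B_1,B_2\) are still Brownian, so that the stochastic integrals agree a.s.\ with pathwise Wiener integrals for a fixed integrand. I would handle this by working on a product space \(\Omega_\varphi\times\Omega_B\) and applying Fubini to a regular conditional distribution. Alternatively, I would compute the conditional characteristic function \(\mathbb E[e^{iu^\top X_T}\mid R_{[0,T]}]\) using the exponential martingale for deterministic integrands and read off the Gaussian law directly.
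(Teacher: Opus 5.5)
Your proposal is correct and follows essentially the same route as the paper. Both arguments integrate the SDE, use conditional Gaussianity and the It\^o isometry to get variances $\sigma_i^2T$ and cross-covariance $\sigma_1\sigma_2T\overline R_T$, change variables with Jacobian $1/(s_1s_2)$, and average over the law of $\overline R_T$. Your extra care over the conditioning step (product space or conditional characteristic function) and your Tonelli justification of the mixture density fill in details the paper leaves implicit.
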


\begin{proof}
Integrating \eqref{eq:corrected-log-asset-dynamics} over \([0,T]\) gives
\[
X_{i,T}
=
x_{i,0}+m_iT
+
\sigma_i\int_0^T\sqrt{R_s}\,dB_{0,s}
+
\sigma_i\int_0^T\sqrt{1-R_s}\,dB_{i,s},
\qquad i=1,2.
\]
Conditional on \(R_{[0,T]}\), the stochastic integrals are jointly Gaussian
with mean zero. By the It\^o isometry and the independence of
\(B_{0,t},B_{1,t},B_{2,t}\),
\[
\begin{aligned}
\operatorname{Var}(X_{i,T}\mid R_{[0,T]})
&=
\sigma_i^2
\left[
\int_0^T R_s\,ds
+
\int_0^T(1-R_s)\,ds
\right] \\
&=
\sigma_i^2T,
\qquad i=1,2,
\end{aligned}
\]
and
\[
\begin{aligned}
\operatorname{Cov}(X_{1,T},X_{2,T}\mid R_{[0,T]})
&=
\sigma_1\sigma_2
\operatorname{Var}\left(
\int_0^T\sqrt{R_s}\,dB_{0,s}
\,\middle|\,R_{[0,T]}
\right)\\
&=
\sigma_1\sigma_2\int_0^T R_s\,ds\\
&=
\sigma_1\sigma_2T\overline R_T.
\end{aligned}
\]
Therefore, conditional on \(R_{[0,T]}\),
\((X_{1,T},X_{2,T})^\top\) is bivariate normal with mean vector
\[
\begin{pmatrix}
x_{1,0}+m_1T\\
x_{2,0}+m_2T
\end{pmatrix}
\]
and covariance matrix \(\Sigma_T(R)\) given in
\eqref{eq:conditional-covariance-corrected}. Moreover,
\[
\begin{aligned}
\operatorname{Corr}
\left(X_{1,T},X_{2,T}\mid R_{[0,T]}\right)
&=
\frac{\sigma_1\sigma_2T\overline R_T}
{(\sigma_1\sqrt T)(\sigma_2\sqrt T)}\\
&=
\overline R_T.
\end{aligned}
\]

Since \(S_{i,T}=e^{X_{i,T}}\),
\[
\left|
\frac{\partial(\log s_1,\log s_2)}
{\partial(s_1,s_2)}
\right|
=
\frac1{s_1s_2}.
\]
Hence
\[
\begin{aligned}
f_{S_{1,T},S_{2,T}\mid R_{[0,T]}}
(s_1,s_2\mid R_{[0,T]})
&=
\frac{
f_{X_{1,T},X_{2,T}\mid R_{[0,T]}}
(\log s_1,\log s_2\mid R_{[0,T]})
}{s_1s_2}\\
&=
g_T(s_1,s_2;\overline R_T),
\end{aligned}
\]
which proves \eqref{eq:conditional-lognormal-density-corrected}.

Finally,
\[
f_{S_{1,T},S_{2,T}}(s_1,s_2)
=
\mathbb E\left[
g_T(s_1,s_2;\overline R_T)
\right].
\]
If \(\overline R_T\) has density \(h_T\), then
\[
\mathbb E\left[
g_T(s_1,s_2;\overline R_T)
\right]
=
\int_0^1
g_T(s_1,s_2;\rho)h_T(\rho)\,d\rho.
\]
\end{proof}

\begin{corollary}[Joint default probability at horizon \(T\)]
\label{cor:joint-default-horizon-corrected}
Let \(B_1,B_2>0\) be the default barriers at horizon \(T\), and define
\[
d_i
=
\frac{\log(B_i/S_{i,0})-m_iT}{\sigma_i\sqrt T},
\qquad i=1,2.
\]
Conditional on the correlation path \(R_{[0,T]}\),
\begin{equation}\label{eq:conditional-joint-default-average}
\mathbb P
\left(
S_{1,T}\le B_1,\,
S_{2,T}\le B_2
\,\middle|\,
R_{[0,T]}
\right)
=
\Phi_2(d_1,d_2;\overline R_T),
\end{equation}
where \(\Phi_2(a,b;\rho)\) denotes the standard bivariate normal
distribution function with correlation \(\rho\). Consequently,
\begin{equation}\label{eq:joint-default-average-correlation}
\mathbb P
\left(
S_{1,T}\le B_1,\,
S_{2,T}\le B_2
\right)
=
\mathbb E\left[
\Phi_2(d_1,d_2;\overline R_T)
\right].
\end{equation}
If \(\overline R_T\) has density \(h_T\) on \((0,1)\), then
\begin{equation}\label{eq:joint-default-average-integral}
\mathbb P
\left(
S_{1,T}\le B_1,\,
S_{2,T}\le B_2
\right)
=
\int_0^1
\Phi_2(d_1,d_2;\rho)h_T(\rho)\,d\rho.
\end{equation}
\end{corollary}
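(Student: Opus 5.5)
The corollary follows from Theorem~\ref{thm:joint-dist-corrected} by standardizing and conditioning. Fix the correlation path \(R_{[0,T]}\). By the theorem, \((X_{1,T},X_{2,T})^\top\) is conditionally bivariate normal with means \(x_{i,0}+m_iT\), variances \(\sigma_i^2T\), and correlation \(\overline R_T\). Define the standardized variables
\[
Z_i=\frac{X_{i,T}-x_{i,0}-m_iT}{\sigma_i\sqrt T},\qquad i=1,2.
\]
Conditionally on \(R_{[0,T]}\), \((Z_1,Z_2)\) is a standard bivariate normal vector with correlation \(\overline R_T\). Since \(\log\) is strictly increasing and \(\sigma_i\sqrt T>0\), we have
\[
\{S_{i,T}\le B_i\}=\{X_{i,T}\le \log B_i\}=\{Z_i\le d_i\},
\]
with \(d_i\) as defined in the statement. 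Hence the conditional joint default probability equals \(\mathbb P(Z_1\le d_1,Z_2\le d_2\mid R_{[0,T]})=\Phi_2(d_1,d_2;\overline R_T)\), which is \eqref{eq:conditional-joint-default-average}.

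Equivalently, one may integrate the conditional density \(g_T(s_1,s_2;\overline R_T)\) from \eqref{eq:conditional-lognormal-density-corrected} over \((0,B_1]\times(0,B_2]\). The substitution \(s_i\mapsto z_i\) has Jacobian \(s_i\sigma_i\sqrt T\), which cancels the prefactor. This route requires \(\overline R_T<1\), which the theorem assumes. The direct event argument above also covers the degenerate case, since \(\Phi_2\) is well defined for \(\rho=1\).

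For the unconditional statement \eqref{eq:joint-default-average-correlation}, apply the tower property:
\[
\mathbb P(S_{1,T}\le B_1,S_{2,T}\le B_2)=\mathbb E\bigl[\mathbb P(\cdot\mid R_{[0,T]})\bigr]=\mathbb E\bigl[\Phi_2(d_1,d_2;\overline R_T)\bigr].
\]
The conditional probability depends on the path only through \(\overline R_T\), so we may condition on \(\overline R_T\) alone. When \(\overline R_T\) has density \(h_T\), the law of the unconscious statistician gives \eqref{eq:joint-default-average-integral}.

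The main technical point is measurability: \(\rho\mapsto\Phi_2(d_1,d_2;\rho)\) must be Borel, and the conditional law must be a regular conditional distribution given the path. Measurability holds because \(\Phi_2\) is continuous in \(\rho\) on \([0,1]\). Integrability is automatic because \(\Phi_2\) takes values in \([0,1]\). The regular conditional distribution is justified by the independence of \(\varphi\) from \((B_0,B_1,B_2)\). This independence lets us treat \(R_{[0,T]}\) as a deterministic path when computing the Gaussian law, exactly as in the proof of Theorem~\ref{thm:joint-dist-corrected}.
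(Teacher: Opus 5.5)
Your proof is correct and follows the paper's own argument. You standardize the log-assets so that $\{S_{i,T}\le B_i\}=\{Z_i\le d_i\}$, invoke Theorem~\ref{thm:joint-dist-corrected} for the conditional standard bivariate normal law with correlation $\overline R_T$, and then take expectations via the tower property and the density $h_T$. Your alternative route through integrating $g_T$, and your remarks on measurability and on conditioning on an independent correlation path, are sound additions that the paper leaves implicit.
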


\begin{proof}
For \(i=1,2\),
\[
\begin{aligned}
S_{i,T}\le B_i
&\Longleftrightarrow
X_{i,T}\le\log B_i\\
&\Longleftrightarrow
\frac{X_{i,T}-x_{i,0}-m_iT}{\sigma_i\sqrt T}
\le d_i.
\end{aligned}
\]
By Theorem~\ref{thm:joint-dist-corrected}, conditional on
\(R_{[0,T]}\), the standardized log-asset values are jointly standard
normal with correlation \(\overline R_T\). Hence
\[
\mathbb P
\left(
S_{1,T}\le B_1,\,
S_{2,T}\le B_2
\,\middle|\,
R_{[0,T]}
\right)
=
\Phi_2(d_1,d_2;\overline R_T).
\]
Taking expectations gives
\[
\mathbb P
\left(
S_{1,T}\le B_1,\,
S_{2,T}\le B_2
\right)
=
\mathbb E\left[
\Phi_2(d_1,d_2;\overline R_T)
\right].
\]
If \(\overline R_T\) has density \(h_T\), then
\[
\mathbb E\left[
\Phi_2(d_1,d_2;\overline R_T)
\right]
=
\int_0^1
\Phi_2(d_1,d_2;\rho)h_T(\rho)\,d\rho.
\]
\end{proof}

\section[Gaussian approximations for time averaged correlation]{Gaussian approximations for\\ time-averaged correlation}
\label{sec:time-averaged-correlation}

The terminal asset, loss, and joint-default formulae derived above must be averaged over the law of
\(\overline R_T\).  Although this law is generally not available in elementary
form, \(\overline R_T\) is an additive functional of the stationary circular
diffusion, so its mean and variance are determined by the stationary
autocovariance of \(R_t\).  Write
\[
m_R=\mathbb E[R_0],
\qquad
C_R(t)=\operatorname{Cov}(R_t,R_0).
\]
Stationarity gives
\begin{equation}\label{eq:time-average-stationary-variance}
\mathbb E[\overline R_T]=m_R,
\qquad
\operatorname{Var}(\overline R_T)
=
\frac{2}{T^2}\int_0^T(T-t)C_R(t)\,dt.
\end{equation}
Thus a Gaussian approximation to \(\overline R_T\) reduces to calculating, or
approximating, the covariance function \(C_R\).  The covariance expressions
below are exponential or finite sums of exponentials.  For later use, define
\begin{equation}\label{eq:time-average-exponential-integral}
\mathcal J(a,T)
=
\int_0^T(T-t)e^{-at}\,dt
=
\frac{T}{a}-\frac{1-e^{-aT}}{a^2},
\qquad a>0.
\end{equation}
Accordingly, a covariance component of the form \(ce^{-at}\) contributes
\(2c\mathcal J(a,T)/T^2\) to the variance of \(\overline R_T\).

\subsection{Circular Brownian motion}

\begin{prop}[Time-averaged correlation under Circular Brownian motion]
\label{prop:time-average-cbm-gaussian}
Suppose that \(\varphi_t\) follows the Circular Brownian motion specification in
Section~\ref{sec:model}, and initialize \(\varphi_0\) from the uniform
invariant distribution on \(\mathbb S^1\).  Then
\begin{equation}\label{eq:cbm-time-average-stationary-covariance}
\mathbb E[R_t]=\frac12,
\qquad
\operatorname{Cov}(R_t,R_0)
=
\frac18e^{-2\sigma_\varphi^2t}.
\end{equation}
Consequently,
\begin{equation}\label{eq:cbm-time-average-stationary-moments}
\mathbb E[\overline R_T]=\frac12,
\qquad
V_{T,\mathrm{stat}}^{\mathrm{CBM}}
:=
\operatorname{Var}(\overline R_T)
=
\frac{1}{8\sigma_\varphi^2T}
-
\frac{1-e^{-2\sigma_\varphi^2T}}
{16\sigma_\varphi^4T^2}.
\end{equation}
Moreover,
\begin{equation}\label{eq:cbm-time-average-clt}
\sqrt T\left(\overline R_T-\frac12\right)
\xrightarrow{D}
N\left(0,\frac{1}{8\sigma_\varphi^2}\right)
\qquad\text{as }T\to\infty.
\end{equation}
\end{prop}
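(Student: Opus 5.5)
The plan is to exploit the identity $\cos^2\varphi=\tfrac12+\tfrac12\cos(2\varphi)$, so that
\[
R_t=\tfrac12+\tfrac12\cos(2\varphi_t).
\]
This reduces everything to moments of the Fourier mode $\cos(2\varphi_t)$ of Circular Brownian motion. With $\varphi_0\sim\mathrm{Unif}(\mathbb S^1)$ independent of $B$, and $\varphi_t=\varphi_0+\sigma_\varphi B_t$, the law of $\varphi_t$ is again uniform. Hence $\mathbb E[\cos(2\varphi_t)]=0$ and $\mathbb E[R_t]=\tfrac12$.

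For the covariance I would write
\[
\operatorname{Cov}(R_t,R_0)=\tfrac14\,\mathbb E[\cos(2\varphi_t)\cos(2\varphi_0)].
\]
I then condition on $\varphi_0$ and use the Gaussian characteristic function:
\[
\mathbb E[\cos(2\varphi_0+2\sigma_\varphi B_t)\mid\varphi_0]=\cos(2\varphi_0)\,e^{-2\sigma_\varphi^2 t},
\]
since $\mathbb E[e^{i2\sigma_\varphi B_t}]=e^{-2\sigma_\varphi^2t}$ and the sine part vanishes by symmetry of $B_t$. Multiplying by $\cos(2\varphi_0)$ and averaging over the uniform law gives $\mathbb E[\cos^2(2\varphi_0)]=\tfrac12$. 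This yields
\[
C_R(t)=\tfrac18 e^{-2\sigma_\varphi^2 t},
\]
which is \eqref{eq:cbm-time-average-stationary-covariance}. Equivalently, $\cos(2\varphi)$ is an eigenfunction of the generator $\tfrac12\sigma_\varphi^2\partial_\varphi^2$ with eigenvalue $-2\sigma_\varphi^2$.

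The moments in \eqref{eq:cbm-time-average-stationary-moments} follow directly from \eqref{eq:time-average-stationary-variance} and \eqref{eq:time-average-exponential-integral}. Stationarity of $R$ holds because $\varphi$ is a stationary Markov process under the uniform initialization. The variance is then
\[
\frac{2}{T^2}\cdot\frac18\,\mathcal J(2\sigma_\varphi^2,T)=\frac{1}{4T^2}\Big(\frac{T}{2\sigma_\varphi^2}-\frac{1-e^{-2\sigma_\varphi^2T}}{4\sigma_\varphi^4}\Big),
\]
which is the stated formula. Note also that $T\operatorname{Var}(\overline R_T)\to 1/(8\sigma_\varphi^2)$.

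The CLT \eqref{eq:cbm-time-average-clt} is the main step, and I would prove it with the Poisson-equation (martingale) method. Take $g(\varphi)=\tfrac12\cos(2\varphi)=R-\tfrac12$, and solve $\mathcal A_\varphi u=-g$ explicitly by $u(\varphi)=\cos(2\varphi)/(4\sigma_\varphi^2)$. Itô's formula then gives
\[
\int_0^T g(\varphi_s)\,ds=u(\varphi_0)-u(\varphi_T)+M_T,
\qquad
M_T=\int_0^T u'(\varphi_s)\sigma_\varphi\,dB_s .
\]
The boundary term is bounded, so it is $o(\sqrt T)$. The quadratic variation is
\[
\langle M\rangle_T=\int_0^T \frac{\sin^2(2\varphi_s)}{4\sigma_\varphi^2}\,ds,
\]
and by the ergodic theorem for the uniform-stationary diffusion (or an $L^2$ computation of the same kind as above) $\langle M\rangle_T/T\to 1/(8\sigma_\varphi^2)$ almost surely, since $\mathbb E\sin^2=\tfrac12$. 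The martingale CLT for continuous martingales (a time-changed Brownian motion) then gives $T^{-1/2}M_T\Rightarrow N(0,1/(8\sigma_\varphi^2))$. Dividing by $\sqrt T$ yields $\sqrt T(\overline R_T-\tfrac12)\Rightarrow N(0,1/(8\sigma_\varphi^2))$, with the limiting variance consistent with the exact computation above.
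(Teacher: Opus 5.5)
Your proposal is correct and follows the paper's proof in substance. The covariance rests on $\cos(2\varphi)$ being an eigenfunction of the generator with eigenvalue $-2\sigma_\varphi^2$; your Gaussian characteristic-function computation is the same fact made explicit. The variance comes from the stationary formula with $\mathcal J(2\sigma_\varphi^2,T)$. The CLT is driven by the explicit Poisson solution $u=\cos(2\varphi)/(4\sigma_\varphi^2)$, which is exactly the paper's $\psi$.

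The only difference is in the last step. The paper feeds this Poisson solution into the additive-functional CLT of Cattiaux et al. You instead unpack that theorem by hand via It\^o's formula, the ergodic limit of $\langle M\rangle_T/T$, and the CLT for continuous martingales, which makes the argument self-contained. Your limit $\sigma_\varphi^2\mathbb E[u'(\varphi_0)^2]=1/(8\sigma_\varphi^2)$ agrees with the paper's $2\langle g,\psi\rangle$ by integration by parts.
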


\begin{proof}
Let
\[
f(\theta)=\cos(2\theta).
\]
The generator of Circular Brownian motion is
\[
\mathcal A_{\mathrm{CBM}}
=
\frac{\sigma_\varphi^2}{2}\frac{d^2}{d\theta^2},
\]
and therefore
\[
\mathcal A_{\mathrm{CBM}}f
=
-2\sigma_\varphi^2f.
\]
It follows from the semigroup eigenfunction relation that
\[
\mathbb E\left[f(\varphi_t)\mid\varphi_0\right]
=
e^{-2\sigma_\varphi^2t}f(\varphi_0).
\]
Under the uniform invariant distribution
\[
\pi_{\mathrm{unif}}(d\theta)=\frac{d\theta}{2\pi},
\]
we have
\[
\mathbb E_{\pi_{\mathrm{unif}}}[f(\varphi_0)]=0,
\qquad
\mathbb E_{\pi_{\mathrm{unif}}}[f(\varphi_0)^2]
=
\frac12.
\]
Hence, by the tower property,
\[
\begin{aligned}
\operatorname{Cov}
\left(f(\varphi_t),f(\varphi_0)\right)
&=
\mathbb E_{\pi_{\mathrm{unif}}}
\left[
\mathbb E\left[f(\varphi_t)\mid\varphi_0\right]
f(\varphi_0)
\right]\\
&=
e^{-2\sigma_\varphi^2t}
\mathbb E_{\pi_{\mathrm{unif}}}
\left[f(\varphi_0)^2\right]\\
&=
\frac12e^{-2\sigma_\varphi^2t}.
\end{aligned}
\]
Since
\[
R_t=\cos^2(\varphi_t)
=
\frac12\{1+f(\varphi_t)\},
\]
we obtain
\[
\operatorname{Cov}(R_t,R_0)
=
\frac18e^{-2\sigma_\varphi^2t},
\]
which proves
\eqref{eq:cbm-time-average-stationary-covariance}.

Applying
\eqref{eq:time-average-stationary-variance} and
\eqref{eq:time-average-exponential-integral} gives
\[
\begin{aligned}
\operatorname{Var}(\overline R_T)
&=
\frac{2}{T^2}
\int_0^T
(T-t)\frac18e^{-2\sigma_\varphi^2t}\,dt\\
&=
\frac{1}{4T^2}
\mathcal J(2\sigma_\varphi^2,T)\\
&=
\frac{1}{8\sigma_\varphi^2T}
-
\frac{1-e^{-2\sigma_\varphi^2T}}
{16\sigma_\varphi^4T^2}.
\end{aligned}
\]

For the limiting distribution, define the centred observable
\[
g(\theta)
=
R(\theta)-\frac12
=
\frac12\cos(2\theta).
\]
It satisfies
\[
\mathcal A_{\mathrm{CBM}}g
=
-2\sigma_\varphi^2g.
\]
Consequently, the Poisson equation
\[
\mathcal A_{\mathrm{CBM}}h=g
\]
admits the smooth periodic solution
\[
h(\theta)
=
-\frac{g(\theta)}{2\sigma_\varphi^2}.
\]
Thus \(g\) belongs to the domain of
\(\mathcal A_{\mathrm{CBM}}^{-1}\), and the additive functional
central limit theorem of
\cite{cattiaux2011central} (see Theorem 3.1 therein)
applies to
\[
\int_0^T g(\varphi_t)\,dt
=
T\left(\overline R_T-\frac12\right).
\]

Equivalently, setting
\[
\psi=-h
=
\frac{g}{2\sigma_\varphi^2},
\]
so that
\[
-\mathcal A_{\mathrm{CBM}}\psi=g,
\]
the asymptotic variance is
\[
\begin{aligned}
2\langle g,\psi\rangle_{L^2(\pi_{\mathrm{unif}})}
&=
\frac{1}{\sigma_\varphi^2}
\mathbb E_{\pi_{\mathrm{unif}}}
\left[g(\varphi_0)^2\right]\\
&=
\frac{1}{\sigma_\varphi^2}
\operatorname{Var}_{\pi_{\mathrm{unif}}}(R_0)\\
&=
\frac{1}{8\sigma_\varphi^2}.
\end{aligned}
\]
Therefore,
\[
\sqrt{T}
\left(
\overline R_T-\frac12
\right)
\overset{D}{\longrightarrow}
N\left(0,\frac{1}{8\sigma_\varphi^2}\right),
\]
which proves \eqref{eq:cbm-time-average-clt}.
\end{proof}
\subsection{Von Mises process}

For the von Mises process, let
\[
d\varphi_t
=
-\lambda\sin(\varphi_t-\mu_\varphi)dt
+\sigma_\varphi dB_t,
\qquad
\lambda>0,
\quad
\sigma_\varphi>0,
\]
and initialize the process from its invariant density
\begin{equation}\label{eq:vm-time-average-invariant}
\pi_{\mathrm{vM}}(\theta)
=
\frac{\exp(\kappa\cos(\theta-\mu_\varphi))}
{2\pi I_0(\kappa)},
\qquad
\kappa=\frac{2\lambda}{\sigma_\varphi^2}.
\end{equation}
Write
\begin{equation}\label{eq:vm-time-average-bessel-ratios}
\beta_j(\kappa)=\frac{I_j(\kappa)}{I_0(\kappa)},
\qquad j=2,4.
\end{equation}
Then the stationary mean of the correlation process is
\begin{equation}\label{eq:vm-time-average-stationary-mean}
m_{\mathrm{vM}}
=
\mathbb E_{\pi_{\mathrm{vM}}}[R_t]
=
\frac12\left[1+\beta_2(\kappa)\cos(2\mu_\varphi)\right].
\end{equation}

\begin{prop}[Gaussian limit for the time-averaged correlation]
\label{prop:time-average-vm-clt}
Suppose that the von Mises process is initialized from its invariant
distribution in \eqref{eq:vm-time-average-invariant}. Then
\begin{equation}
\label{eq:vm-time-average-clt}
\sqrt{T}
\left(
\overline R_T-m_{\mathrm{vM}}
\right)
\xrightarrow{D}
N\left(0,\tau_{\mathrm{vM}}^2\right),
\end{equation}
where
\begin{equation}
\label{eq:vm-time-average-asymptotic-variance}
\tau_{\mathrm{vM}}^2
=
2\int_0^\infty
\operatorname{Cov}_{\pi_{\mathrm{vM}}}
(R_t,R_0)\,dt.
\end{equation}
\end{prop}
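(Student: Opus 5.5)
The plan follows the route used for Proposition~\ref{prop:time-average-cbm-gaussian}: establish a spectral gap for the von Mises generator, solve the Poisson equation for the centred observable \(g(\theta)=R(\theta)-m_{\mathrm{vM}}\), and then invoke the additive-functional central limit theorem of \cite{cattiaux2011central} (Theorem 3.1). The difference from the Circular Brownian motion case is that \(g\) is no longer an eigenfunction, so the Poisson solution is not explicit. I will obtain it abstractly instead.

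\textbf{Step 1 (reversibility and spectral gap).} The generator
\[
\mathcal A_{\mathrm{vM}}f=-\lambda\sin(\theta-\mu_\varphi)f'+\tfrac{\sigma_\varphi^2}{2}f''
\]
can be written as \(\tfrac{\sigma_\varphi^2}{2}\pi_{\mathrm{vM}}^{-1}(\pi_{\mathrm{vM}}f')'\). This holds because \((\log\pi_{\mathrm{vM}})'=-\kappa\sin(\theta-\mu_\varphi)\) and \(\kappa\sigma_\varphi^2/2=\lambda\). Hence the process is reversible with respect to \(\pi_{\mathrm{vM}}\), and \(-\mathcal A_{\mathrm{vM}}\) is self-adjoint and nonnegative on \(L^2(\pi_{\mathrm{vM}})\).

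The state space \(\mathbb S^1\) is compact and \(\pi_{\mathrm{vM}}\) has density bounded above and below by \(e^{\pm\kappa}/(2\pi I_0(\kappa))\). The operator is therefore a uniformly elliptic Sturm--Liouville operator on a compact manifold. It has compact resolvent and a simple zero eigenvalue with constant eigenfunction. Its spectral gap \(\gamma>0\) can be bounded below explicitly, for instance by Holley--Stroock perturbation of the uniform case: \(\gamma\ge \tfrac{\sigma_\varphi^2}{2}e^{-4\kappa}\).

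\textbf{Step 2 (Poisson equation and the CLT).} The function \(g\) is smooth and bounded, and \(\mathbb E_{\pi_{\mathrm{vM}}}[g]=0\) by \eqref{eq:vm-time-average-stationary-mean}. So \(g\) lies in the orthogonal complement of the constants. By the spectral gap, \(-\mathcal A_{\mathrm{vM}}\psi=g\) has a unique centred solution \(\psi\in L^2(\pi_{\mathrm{vM}})\) with \(\|\psi\|\le\gamma^{-1}\|g\|\). Elliptic regularity makes \(\psi\) smooth and periodic, so \(g\) lies in the range of the generator.

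The stationary, reversible setting with \(g\in\mathrm{Ran}(\mathcal A_{\mathrm{vM}})\) is exactly the hypothesis of \cite{cattiaux2011central}. That result gives
\[
\frac1{\sqrt T}\int_0^T g(\varphi_t)\,dt=\sqrt T(\overline R_T-m_{\mathrm{vM}})\Rightarrow N(0,2\langle g,\psi\rangle_{L^2(\pi_{\mathrm{vM}})}).
\]
As a self-contained alternative, Itô's formula gives \(\psi(\varphi_T)-\psi(\varphi_0)=-\int_0^Tg(\varphi_t)\,dt+\int_0^T\sigma_\varphi\psi'(\varphi_t)\,dB_t\). The boundary term is \(O(1)\), and the martingale CLT applies with ergodic quadratic variation \(\sigma_\varphi^2\mathbb E_\pi[\psi'^2]=2\langle g,\psi\rangle\) by the Dirichlet form identity.

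\textbf{Step 3 (identifying the variance).} I will show that \(2\langle g,\psi\rangle\) equals \eqref{eq:vm-time-average-asymptotic-variance}. By the spectral theorem, \(\psi=\int_0^\infty P_tg\,dt\), where the integral converges in \(L^2\) because \(\|P_tg\|\le e^{-\gamma t}\|g\|\). Therefore
\[
\langle g,\psi\rangle=\int_0^\infty\mathbb E_\pi[g(\varphi_0)g(\varphi_t)]\,dt=\int_0^\infty\operatorname{Cov}_{\pi_{\mathrm{vM}}}(R_t,R_0)\,dt.
\]
The exponential decay also justifies absolute convergence and the exchange of integrals.

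\textbf{Main obstacle.} The step needing the most care is Step 1, securing the spectral gap and hence existence of the Poisson solution, since no closed-form eigenfunction is available. I would rely on the reversibility and compactness argument above. If that route proves awkward, I would instead use Doeblin or uniform ergodicity on the compact circle, which follows from the transition density being bounded below, to get exponential \(L^2\) decay directly.
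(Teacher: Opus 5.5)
Your proposal is correct and follows essentially the same route as the paper: you use reversibility and exponential $L^2(\pi_{\mathrm{vM}})$ decay of the semigroup on the compact circle, then apply the additive-functional CLT of Cattiaux et al., and identify the limit variance as twice the integrated stationary autocovariance. Your explicit Holley--Stroock gap bound, the Poisson-equation/martingale-CLT alternative, and the identification $\psi=\int_0^\infty P_tg\,dt$ only make rigorous steps that the paper asserts, or handles by dominated convergence on $T^{-1}\operatorname{Var}\bigl(\int_0^T g\,dt\bigr)$; your gap bound is valid, and since the log-density of $\pi_{\mathrm{vM}}$ has oscillation $2\kappa$ it in fact holds with $e^{-2\kappa}$.
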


\begin{proof}
Let
\[
X_t=(\varphi_t-\mu_\varphi)\pmod{2\pi}.
\]
Under the stationary initialization,
\[
X_t\sim\widetilde\pi_{\mathrm{vM}}(x)\,dx,
\qquad
\widetilde\pi_{\mathrm{vM}}(x)
=
\frac{e^{\kappa\cos x}}{2\pi I_0(\kappa)}.
\]
The standard Bessel moment identities give
\[
\mathbb E[\cos(2X_t)]=\beta_2(\kappa),
\qquad
\mathbb E[\sin(2X_t)]=0.
\]
Since
\[
R_t
=
\frac12
\left[
1+\cos(2\mu_\varphi)\cos(2X_t)
-\sin(2\mu_\varphi)\sin(2X_t)
\right],
\]
it follows that
\[
\mathbb E[R_t]
=
\frac12
\left[
1+\beta_2(\kappa)\cos(2\mu_\varphi)
\right]
=
m_{\mathrm{vM}}.
\]

Now define the smooth centred observable
\[
g(x)
=
\cos^2(x+\mu_\varphi)-m_{\mathrm{vM}}.
\]
The process \(X\) is a stationary, reversible, uniformly elliptic
diffusion on the compact circle. Its Markov semigroup therefore
converges exponentially on the mean-zero subspace of
\(L^2(\widetilde\pi_{\mathrm{vM}})\). In particular, \(g\) satisfies
the integrability condition in
\cite[Corollary~3.2]{cattiaux2011central}.
The functional central limit theorem of
\cite[Theorem~3.1]{cattiaux2011central} therefore leads to,
\[
\frac{1}{\sqrt T}
\int_0^T g(X_t)\,dt
\xrightarrow{D}
N(0,\tau_{\mathrm{vM}}^2).
\]
Since
\[
\frac{1}{\sqrt T}
\int_0^T g(X_t)\,dt
=
\sqrt T
\left(
\overline R_T-m_{\mathrm{vM}}
\right),
\]
this proves \eqref{eq:vm-time-average-clt}.

Finally, from stationarity it follows,
\[
\frac1T
\operatorname{Var}
\left(
\int_0^T g(X_t)\,dt
\right)
=
2\int_0^T
\left(1-\frac{t}{T}\right)
\operatorname{Cov}(R_t,R_0)\,dt.
\]
Since the covariance is integrable, from the dominated
convergence theorem it follows that,
\[
\tau_{\mathrm{vM}}^2
=
2\int_0^\infty
\operatorname{Cov}_{\pi_{\mathrm{vM}}}(R_t,R_0)\,dt.
\]
\end{proof}

To obtain an explicit finite time approximation, define
\[
f_c(x)
=
\cos(2x)-\beta_2(\kappa),
\qquad
f_s(x)
=
\sin(2x),
\]
and
\begin{equation}
\label{eq:vm-two-mode-variances}
V_c
=
\frac{1+\beta_4(\kappa)}{2}
-\beta_2(\kappa)^2,
\qquad
V_s
=
\frac{1-\beta_4(\kappa)}{2}.
\end{equation}
Let
\begin{equation}
\label{eq:vm-two-mode-rates}
\alpha_c
=
2\sigma_\varphi^2\frac{V_s}{V_c},
\qquad
\alpha_s
=
\sigma_\varphi^2
\frac{1+\beta_4(\kappa)}{V_s}.
\end{equation}

The centred correlation can be written as,
\[
R_t-m_{\mathrm{vM}}
=
\frac12\cos(2\mu_\varphi)f_c(X_t)
-
\frac12\sin(2\mu_\varphi)f_s(X_t).
\]
The invariant density is even, \(f_c\) is even, and \(f_s\) is
odd. Moreover, the von Mises process generator is invariant under the
reflection \(x\mapsto -x\), so its semigroup preserves parity.
Consequently,
\[
\operatorname{Cov}(f_c(X_t),f_s(X_0))
=
\operatorname{Cov}(f_s(X_t),f_c(X_0))
=
0
\]
for every \(t\geq0\). From routine calculations it follows that,
\[
\operatorname{Var}(f_c(X_0))=V_c,
\qquad
\operatorname{Var}(f_s(X_0))=V_s.
\]
For any centred smooth periodic function \(f\), stationarity and the
semigroup representation give
\[
\operatorname{Cov}(f(X_t),f(X_0))
=
\langle P_tf,f\rangle_{L^2(\widetilde\pi_{\mathrm{vM}})}.
\]
Therefore, by the definition of the infinitesimal generator and the
Dirichlet form identity for a reversible diffusion,
\[
\begin{aligned}
-\left.
\frac{d}{dt}
\operatorname{Cov}(f(X_t),f(X_0))
\right|_{t=0+}
&=
-\langle
\mathcal A_{\mathrm{vM}}f,f
\rangle_{L^2(\widetilde\pi_{\mathrm{vM}})} \\
&=
\frac{\sigma_\varphi^2}{2}
\mathbb E_{\widetilde\pi_{\mathrm{vM}}}
\left[f'(X_0)^2\right].
\end{aligned}
\]
See \cite[(1.4.1), (1.6.3), and
(1.11.1)-(1.11.2)]{bakry2014analysis}.

Define the two modal autocovariance functions by
\[
C_c(t)
=
\operatorname{Cov}\bigl(f_c(X_t),f_c(X_0)\bigr),
\qquad
C_s(t)
=
\operatorname{Cov}\bigl(f_s(X_t),f_s(X_0)\bigr).
\]
Since \(f_c\) and \(f_s\) are centred under the invariant
distribution, these may equivalently be written as
\[
C_c(t)
=
\mathbb E\bigl[f_c(X_t)f_c(X_0)\bigr],
\qquad
C_s(t)
=
\mathbb E\bigl[f_s(X_t)f_s(X_0)\bigr].
\]

Applying the covariance-derivative identity to \(f_c\) gives
\[
\begin{aligned}
-C_c'(0+)
&=
\frac{\sigma_\varphi^2}{2}
\mathbb E\!\left[f_c'(X_0)^2\right] \\
&=
\frac{\sigma_\varphi^2}{2}
\mathbb E\!\left[4\sin^2(2X_0)\right] \\
&=
2\sigma_\varphi^2V_s.
\end{aligned}
\]
Similarly,
\[
\begin{aligned}
-C_s'(0+)
&=
\frac{\sigma_\varphi^2}{2}
\mathbb E\!\left[f_s'(X_0)^2\right] \\
&=
\frac{\sigma_\varphi^2}{2}
\mathbb E\!\left[4\cos^2(2X_0)\right] \\
&=
\sigma_\varphi^2
\{1+\beta_4(\kappa)\}.
\end{aligned}
\]
We approximate each modal covariance by the single exponential that
matches its exact value and slope at the origin:
\[
C_c(t)\approx V_ce^{-\alpha_ct},
\qquad
C_s(t)\approx V_se^{-\alpha_st}.
\]
This gives
\begin{equation}
\label{eq:vm-two-mode-covariance}
C_R^{(2)}(t)
=
\frac14
\left[
\cos^2(2\mu_\varphi)V_ce^{-\alpha_ct}
+
\sin^2(2\mu_\varphi)V_se^{-\alpha_st}
\right].
\end{equation}

Substitution into the stationary time-average variance identity
yields
\begin{equation}
\label{eq:vm-two-mode-time-average-variance}
V_{T,(2)}^{\mathrm{vM}}
=
\frac{1}{2T^2}
\left[
\cos^2(2\mu_\varphi)V_c
\mathcal J(\alpha_c,T)
+
\sin^2(2\mu_\varphi)V_s
\mathcal J(\alpha_s,T)
\right].
\end{equation}
Motivated by Proposition~\ref{prop:time-average-vm-clt}, we use the
moment matched finite time approximation,
\begin{equation}
\label{eq:vm-two-mode-time-average-gaussian}
\overline R_T
\ \dot\sim\
N\left(
m_{\mathrm{vM}},
V_{T,(2)}^{\mathrm{vM}}
\right).
\end{equation}

\section{Monte Carlo evaluation of survival and first-passage probabilities}
\label{sec:barrier-survival}

The terminal results considered earlier depend on the correlation path through the
time-averaged correlation.  Barrier events are different: whether an asset crosses
its default boundary before the horizon depends on the full ordered path of
\((X_{1,t},X_{2,t},\varphi_t)\).  We therefore evaluate survival and
first-passage probabilities by simulating the original stochastic-correlation
model \eqref{eq:corrected-log-asset-dynamics}, without replacing \(R_t\) by a
constant or by \(\overline R_T\).

Let
\[
b_i=\log B_i,
\qquad
\tau_i=\inf\{t\geq0:X_{i,t}\leq b_i\},
\qquad i=1,2,
\]
and let \(j\in\{\mathrm{CBM},\mathrm{vM}\}\) index the Circular Brownian
motion and von Mises process specifications for \(\varphi\), respectively.  We write
\(\mathbb P_j\) for probabilities under specification \(j\).  The joint
survival probability, the first-to-default probability, and the probability that
both names have defaulted by time \(T\) are
\begin{align}
 p_{\mathrm{Surv}}^{(j)}(T)
 &:=
 \mathbb P_j(\tau_1>T,\tau_2>T),
 \label{eq:mc-target-survival}\\
 p_{\mathrm{FtD}}^{(j)}(T)
 &:=
 \mathbb P_j(\min\{\tau_1,\tau_2\}\leq T)
 =1-p_{\mathrm{Surv}}^{(j)}(T),
 \label{eq:mc-target-first-default}\\
 p_{\mathrm{JFPT}}^{(j)}(T)
 &:=
 \mathbb P_j(\tau_1\leq T,\tau_2\leq T)
 =\mathbb P_j(\max\{\tau_1,\tau_2\}\leq T).
 \label{eq:mc-target-joint-fpt}
\end{align}
Thus \(p_{\mathrm{FtD}}^{(j)}\) concerns the first default, whereas
\(p_{\mathrm{JFPT}}^{(j)}\) is the joint first-passage probability used in this
paper.

For each model \(j\), simulate \(M\) independent numerical paths of the state
process
\[
\left\{
(X_{1,t}^{(j,m)},X_{2,t}^{(j,m)},\varphi_t^{(j,m)}):0\leq t\leq T
\right\},
\qquad m=1,\ldots,M,
\]
using the corresponding angular dynamics, \(R_t^{(j,m)}=
\cos^2(\varphi_t^{(j,m)})\), a common systematic Brownian motion, and
independent idiosyncratic Brownian motions for the two assets on each path.
Let
\[
\tau_i^{(j,m)}
=
\inf\{t\geq0:X_{i,t}^{(j,m)}\leq b_i\}
\]
be the resulting first-passage times.  The Monte Carlo estimators are
\begin{align}
\widehat p_{\mathrm{Surv},M}^{(j)}(T)
&=
\frac1M\sum_{m=1}^M
\mathbf 1\{\tau_1^{(j,m)}>T,\tau_2^{(j,m)}>T\},
\label{eq:mc-estimator-survival}\\
\widehat p_{\mathrm{FtD},M}^{(j)}(T)
&=
\frac1M\sum_{m=1}^M
\mathbf 1\{\min(\tau_1^{(j,m)},\tau_2^{(j,m)})\leq T\}
=1-\widehat p_{\mathrm{Surv},M}^{(j)}(T),
\label{eq:mc-estimator-first-default}\\
\widehat p_{\mathrm{JFPT},M}^{(j)}(T)
&=
\frac1M\sum_{m=1}^M
\mathbf 1\{\tau_1^{(j,m)}\leq T,\tau_2^{(j,m)}\leq T\}.
\label{eq:mc-estimator-joint-fpt}
\end{align}
More generally, the empirical joint distribution of the two passage times is
\begin{equation}
\widehat F_{\tau_1,\tau_2;M}^{(j)}(s,t)
=
\frac1M\sum_{m=1}^M
\mathbf 1\{\tau_1^{(j,m)}\leq s,\tau_2^{(j,m)}\leq t\},
\qquad s,t\geq0.
\label{eq:mc-estimator-joint-fpt-cdf}
\end{equation}

\begin{prop}\label{prop:mc-survival-fpt}
For fixed \(T\) and \(j\in\{\mathrm{CBM},\mathrm{vM}\}\), the estimators
\eqref{eq:mc-estimator-survival} and
\eqref{eq:mc-estimator-joint-fpt} are unbiased and strongly consistent.  If
\(p\) denotes the corresponding probability, then
\begin{equation}
\operatorname{Var}(\widehat p_M)
=
\frac{p(1-p)}{M},
\qquad
\sqrt M(\widehat p_M-p)
\xrightarrow{D}
N\bigl(0,p(1-p)\bigr).
\label{eq:mc-estimator-clt}
\end{equation}
A consistent estimator of the Monte Carlo standard error is
\begin{equation}
\widehat{\operatorname{se}}(\widehat p_M)
=
\sqrt{\frac{\widehat p_M(1-\widehat p_M)}{M}}.
\label{eq:mc-standard-error}
\end{equation}
The same statements hold pointwise for
\(\widehat F_{\tau_1,\tau_2;M}^{(j)}(s,t)\).
\end{prop}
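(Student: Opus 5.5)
The argument reduces everything to i.i.d. Bernoulli sampling. For fixed \(T\) and \(j\), the \(M\) simulated paths are independent and identically distributed, so each indicator
\[
Z_m=\mathbf 1\{\tau_1^{(j,m)}>T,\tau_2^{(j,m)}>T\}
\quad\text{or}\quad
Z_m=\mathbf 1\{\tau_1^{(j,m)}\le T,\tau_2^{(j,m)}\le T\}
\]
is a measurable function of the \(m\)-th path alone. Hence \(Z_1,\ldots,Z_M\) are i.i.d.\ Bernoulli\((p)\), where \(p=\mathbb P_j(\cdot)\) is the target probability \eqref{eq:mc-target-survival} or \eqref{eq:mc-target-joint-fpt}. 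Some measurability must be checked first. Each \(\tau_i\) is a hitting time of the closed set \((-\infty,b_i]\) by a continuous adapted process, so it is a stopping time. Therefore \(\{\tau_i>T\}=\{\min_{0\le t\le T}X_{i,t}>b_i\}\) is an event. The paths of \(X_i\) are continuous because \(R_t\in[0,1]\) is bounded, so the stochastic integrals in \eqref{eq:corrected-log-asset-dynamics} are well defined. Here \(p\) is understood under the law of the simulated paths, which I take to be exact draws from the model, as the statement presupposes.

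With this reduction, the claims follow from standard limit theorems.

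\textbf{Unbiasedness.} Linearity gives \(\mathbb E[\widehat p_M]=\mathbb E[Z_1]=p\).

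\textbf{Variance.} Independence gives \(\operatorname{Var}(\widehat p_M)=\operatorname{Var}(Z_1)/M=p(1-p)/M\).

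\textbf{Strong consistency.} The strong law of large numbers applies because \(Z_1\) is bounded, so \(\widehat p_M\to p\) almost surely.

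\textbf{Asymptotic normality.} The classical Lindeberg--L\'evy central limit theorem, using the finite variance \(p(1-p)\), yields \eqref{eq:mc-estimator-clt}. If \(p\in\{0,1\}\), the limit is the degenerate \(N(0,0)\) and the statement holds trivially.

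\textbf{Standard error.} \(\widehat p_M(1-\widehat p_M)\to p(1-p)\) almost surely by the continuous mapping theorem. Taking square roots and dividing by \(\sqrt M\), \(\widehat{\operatorname{se}}/\sqrt{p(1-p)/M}\to1\) almost surely whenever \(p\in(0,1)\), which is the sense of consistency intended for \eqref{eq:mc-standard-error}. Slutsky's lemma then gives the studentized limit as a by-product.

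For \(\widehat F_{\tau_1,\tau_2;M}^{(j)}(s,t)\), I fix \((s,t)\) and set \(Z_m=\mathbf 1\{\tau_1^{(j,m)}\le s,\tau_2^{(j,m)}\le t\}\). This is again i.i.d.\ Bernoulli with parameter \(F_{\tau_1,\tau_2}(s,t)\), and the same argument applies verbatim.

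The only step with any content is the first: confirming that the indicators are genuine i.i.d.\ Bernoulli variables, meaning the events are measurable and the target is the true model probability. A practical caveat belongs here. If the paths come from a time-discretized scheme, the estimator is unbiased for the discretized first-passage probability rather than for the continuous one, because monitoring on a grid misses crossings between grid points. I would state explicitly that the proposition concerns exact path sampling, or the discretized law, and would not attempt to bound the discretization bias.
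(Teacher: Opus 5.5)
Your proposal is correct and follows essentially the same route as the paper's proof. Both reduce each estimator to a sample mean of i.i.d.\ Bernoulli indicators, then invoke linearity, the strong law, the classical CLT, and a plug-in step for the standard error. Your measurability check and your remark that Euler-discretized paths give unbiasedness only for the discretized first-passage law are sound refinements that the paper's brief proof leaves implicit.
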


\begin{proof}
Each estimator is the sample mean of independent Bernoulli random variables
whose success probability is the target probability.  Unbiasedness and the
variance formula follow directly.  Strong consistency follows from the strong
law of large numbers, and \eqref{eq:mc-estimator-clt} follows from the classical
central limit theorem.  Replacing \(p\) by its consistent sample analogue gives
\eqref{eq:mc-standard-error}.
\end{proof}

\section{Simulation study}\label{sec:simulation}

The numerical study reports four distinct probability measures. At horizon \(T\), terminal joint default is
\begin{equation}
p_{\mathrm{JD}}(T)
=
\mathbb P\{S_{1,T}\leq B_1,S_{2,T}\leq B_2\}.
\label{eq:simulation-jd}
\end{equation}
By Corollary \ref{cor:joint-default-horizon-corrected}, its conditional value
is \(\Phi_2(d_1,d_2;\overline R_T)\), so we estimate it by the
average across simulated correlation paths
\[
\widehat p_{\mathrm{JD}}(T)
=
\frac{1}{M}\sum_{m=1}^{M}
\Phi_2\!\left(d_1,d_2;\overline R_T^{(m)}\right).
\]
The remaining quantities are path events: joint survival
\(p_{\mathrm{Surv}}(T)\), first to default
\(p_{\mathrm{FtD}}(T)=1-p_{\mathrm{Surv}}(T)\), and joint first passage
\(p_{\mathrm{JFPT}}(T)\), as defined in
\eqref{eq:mc-target-survival}--\eqref{eq:mc-target-joint-fpt}.  In
particular, \(p_{\mathrm{JFPT}}\) is the probability that both barriers
have been crossed by \(T\).  An
asset may cross its barrier and subsequently recover, so
\(p_{\mathrm{JD}}\) and \(p_{\mathrm{JFPT}}\) need not coincide.

The reference parameter set is $S_{1,0}=S_{2,0}=100, B_1=B_2=60,\mu_1=\mu_2=0.03, \sigma_1=\sigma_2=0.25.$
For the von Mises process, $\varphi_0=\mu_\varphi=\arccos\sqrt{0.20}=1.1071,\lambda=1.96,\sigma_\varphi=0.70$,
and hence \(\kappa=2\lambda/\sigma_\varphi^2=8\).  We use
\(T\in\{0.25,0.5,1,2\}\) years, \(M=30{,}000\) paths, and increments
\(\Delta t=1/504\). The correlation paths are generated using Euler-Maruyama discretization. 

\subsection{Dependence benchmarks across horizons}
\label{subsec:simulation-benchmarks}

Table~\ref{tab:simulation-benchmark} reports the two year results.  Under the
von Mises process at the reference parameter set, the joint default probability is
$0.013$, survival probability is $0.745$, first time to default probability is $0.255$, and joint first passage time probability is
$0.036$.  The mean time averaged correlation is $0.261$. Circular Brownian motion produces the larger terminal joint default
estimate, accompanied by a higher mean time averaged correlation $0.366$, so the
difference is not a comparison at a common dependence level. Figure~\ref{fig:simulation-horizon} shows the four measures across time.

\begin{table}[H]
\centering
\caption{Two year simulation benchmarks}
\label{tab:simulation-benchmark}
\begin{adjustbox}{max width=\textwidth}
\begin{tabular}{lccccc}
\toprule
Dependence specification &
\(\overline R_T\), mean (SD) &
\(p_{\mathrm{JD}}\) &
\(p_{\mathrm{Surv}}\) &
\(p_{\mathrm{FtD}}\) &
\(p_{\mathrm{JFPT}}\)\\
\midrule
Constant correlation
& 0.270 (0.000) & 0.01271 (exact) & 0.74420 [0.73923, 0.74911]
& 0.25580 [0.25089, 0.26077] & 0.03473 [0.03272, 0.03687]\\
Circular Brownian motion
& 0.366 (0.225) & 0.01776 [0.01764, 0.01787] & 0.75333 [0.74842, 0.75818]
& 0.24667 [0.24182, 0.25158] & 0.04263 [0.04041, 0.04498]\\
von Mises process
& 0.261 (0.136) & 0.01292 [0.01287, 0.01298] & 0.74490 [0.73994, 0.74980]
& 0.25510 [0.25020, 0.26006] & 0.03643 [0.03437, 0.03861]\\
\bottomrule
\end{tabular}
\end{adjustbox}
\begin{minipage}{0.98\textwidth}
\footnotesize
\emph{Notes:} \(M=30{,}000\), \(\Delta t=1/504\), and \(T=2\).
Square brackets show 95\% intervals.  The constant model terminal
probability is analytical at its fixed dependence.
\end{minipage}
\end{table}

\begin{figure}[H]
\centering
\includegraphics[width=0.96\textwidth]{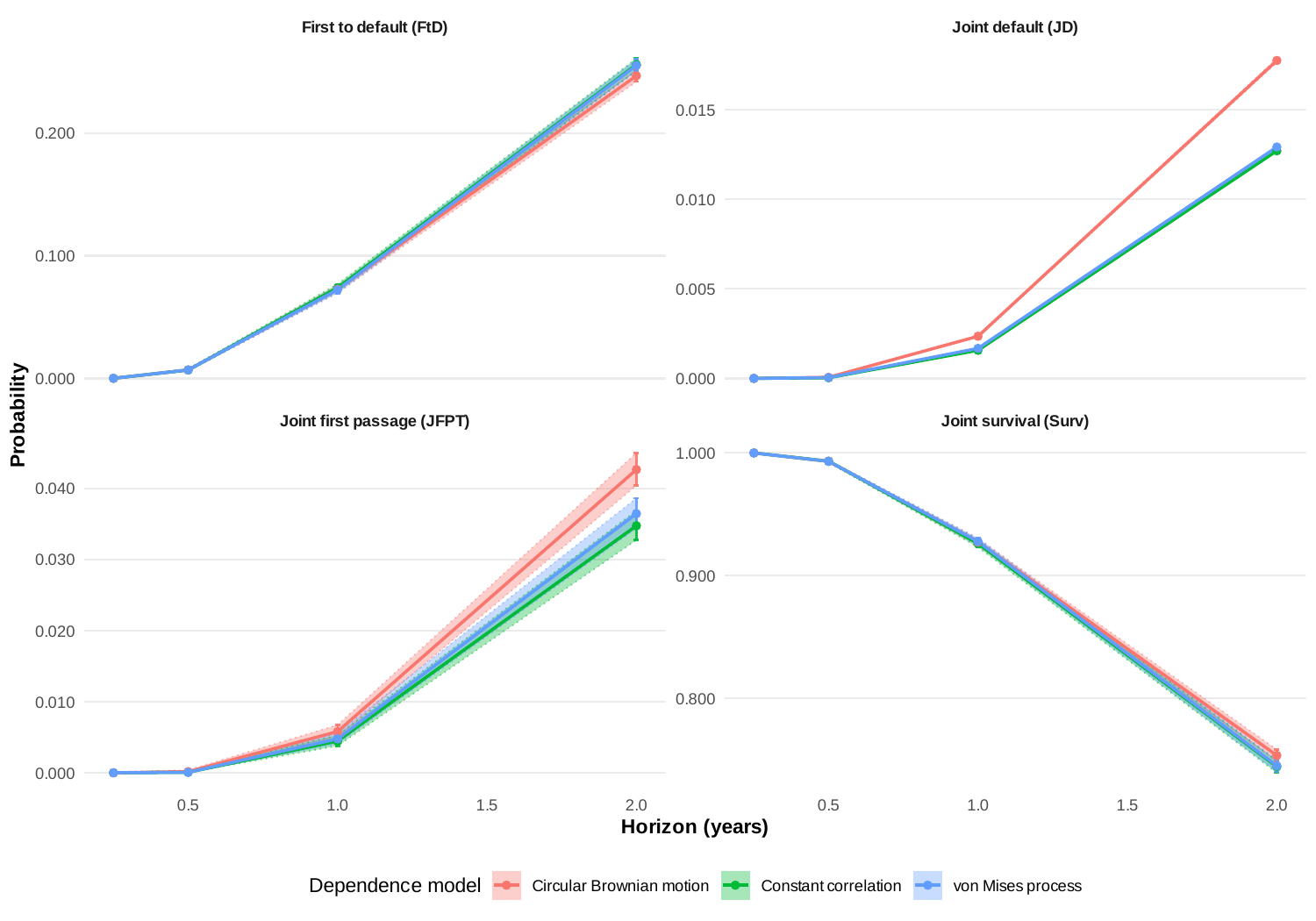}
\caption{Credit event probabilities by dependence process: terminal joint
default, joint survival, first to default, and joint first passage under
constant dependence, Circular Brownian motion, and the von Mises process.
The figure shows 95\% intervals.}
\label{fig:simulation-horizon}
\end{figure}

\subsection{Effects of model parameters}
\label{subsec:simulation-sensitivity}

Marginal asset volatility generates the largest changes, see Table \ref{tab:simulation-sensitivity}.  We see that raising
$\sigma_1,\sigma_2$ expectedly increases terminal joint default probability, reduces survival probability, and increases joint first passage time probability.  

\begin{table}[H]
\centering
\caption{Two year effects of asset volatility and initial dependence}
\label{tab:simulation-sensitivity}
\begin{adjustbox}{max width=\textwidth}
\begin{tabular}{llccccc}
\toprule
Experiment & Level & Mean \(\overline R_T\) &
\(p_{\mathrm{JD}}\) & \(p_{\mathrm{Surv}}\) &
\(p_{\mathrm{FtD}}\) & \(p_{\mathrm{JFPT}}\)\\
\midrule
Asset volatility \(\sigma\) & 0.15 & 0.261 & 0.000184 & 0.97957 & 0.02043 & 0.000467\\
& 0.25 & 0.261 & 0.01292 & 0.74490 & 0.25510 & 0.03643\\
& 0.35 & 0.261 & 0.05433 & 0.47547 & 0.52453 & 0.14350\\
\addlinespace
Initial dependence \(R_0\) & 0.10 & 0.238 & 0.01215 & 0.74333 & 0.25667 & 0.03383\\
& 0.50 & 0.323 & 0.01514 & 0.74963 & 0.25037 & 0.04213\\
& 0.90 & 0.423 & 0.01915 & 0.76190 & 0.23810 & 0.05400\\
\bottomrule
\end{tabular}
\end{adjustbox}
\end{table}

The initial correlation $(R_0)$ experiment illustrates a shift toward concordant
outcomes. Higher positive initial correlation makes the two names more likely
to experience the same outcome, i.e., they either both cross their barriers or
both survive. Since the dependence specification changes only the coupling between
the two asset processes, it leaves each asset's marginal dynamics
unchanged, the individual first-passage probabilities
\(P(\tau_i\leq T)\) remain fixed. Therefore, an increase in the
joint first-passage probability reduces the first-to-default probability
and raises joint survival. Thus probability is reallocated from the discordant
events in which only one name defaults toward the concordant events
in which both default or both survive.

Increasing \(\sigma_\varphi\), the volatility of the correlation
process raises the mean time-averaged correlation.  Terminal joint
default, joint survival, and joint first passage probabilities increase,
whereas the first-to-default probability decreases.  Increasing
\(\kappa\) while holding \(\sigma_\varphi\) fixed also increases the
mean-reversion parameter \(\lambda\).  The resulting increase in
stationary concentration reduces the mean time-averaged correlation,
terminal joint default, joint survival, and joint first passage
probabilities, while increasing the first-to-default probability.
Finally, increasing \(\lambda\) while adjusting \(\sigma_\varphi\) to
hold \(\kappa\) fixed raises the finite-horizon mean time-averaged
correlation because the process mixes more rapidly away from its common
initial state.  Terminal joint default and joint first passage
probabilities generally increase, joint survival increases, and the
first-to-default probability decreases.  The simulations show that \(\lambda\) controls the speed of mean
reversion, whereas \(\kappa\) controls the stationary concentration of
the angular process.
\begin{figure}[H]
\centering
\includegraphics[width=0.94\textwidth]{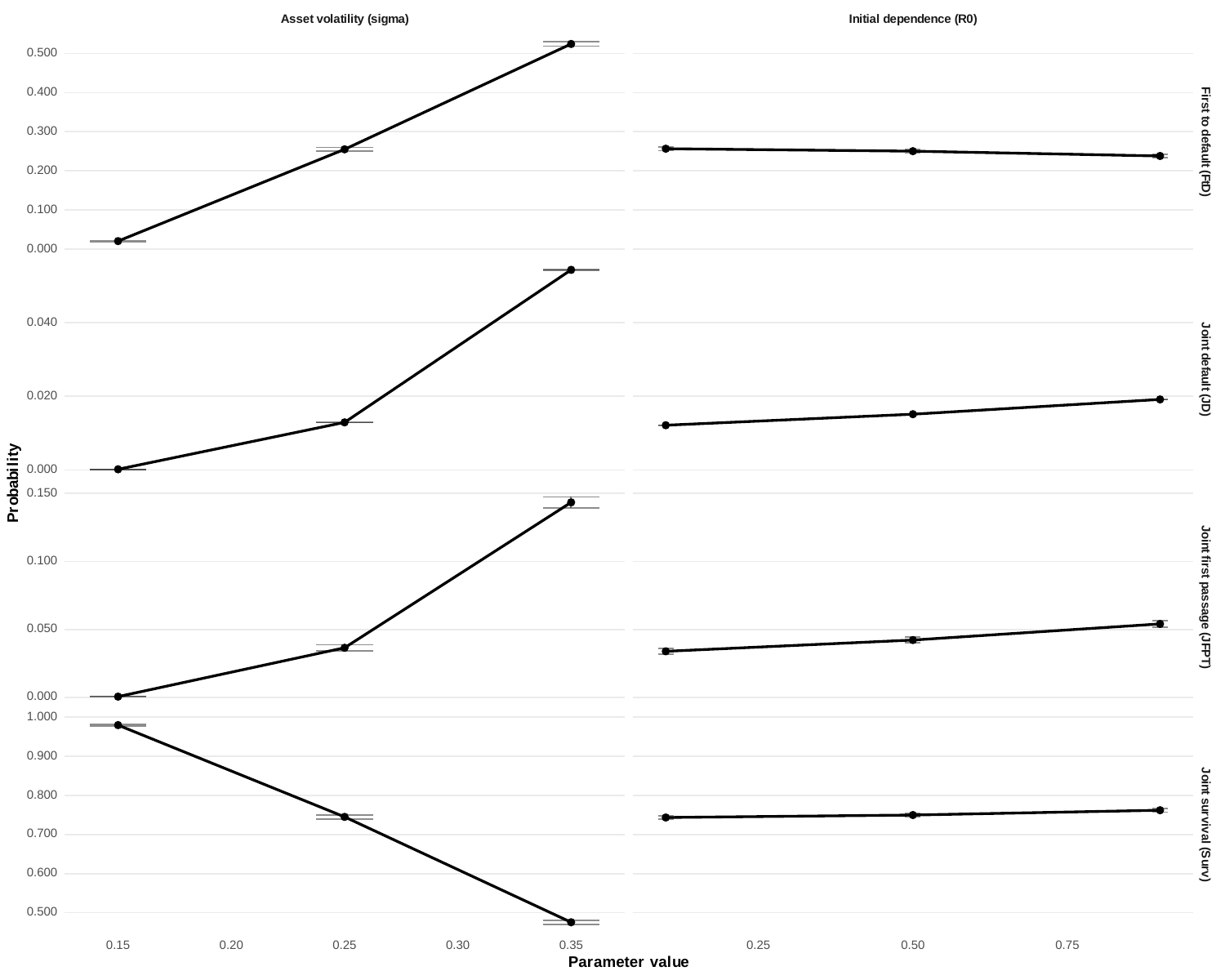}
\caption{Effects of asset volatility and initial dependence under the von Mises process.
The asset-volatility and initial-dependence experiments report all four
terminal and barrier event probabilities; 95\% intervals are shown.}
\label{fig:simulation-sensitivity}
\end{figure}

\begin{figure}[H]
\centering
\includegraphics[width=0.98\textwidth]{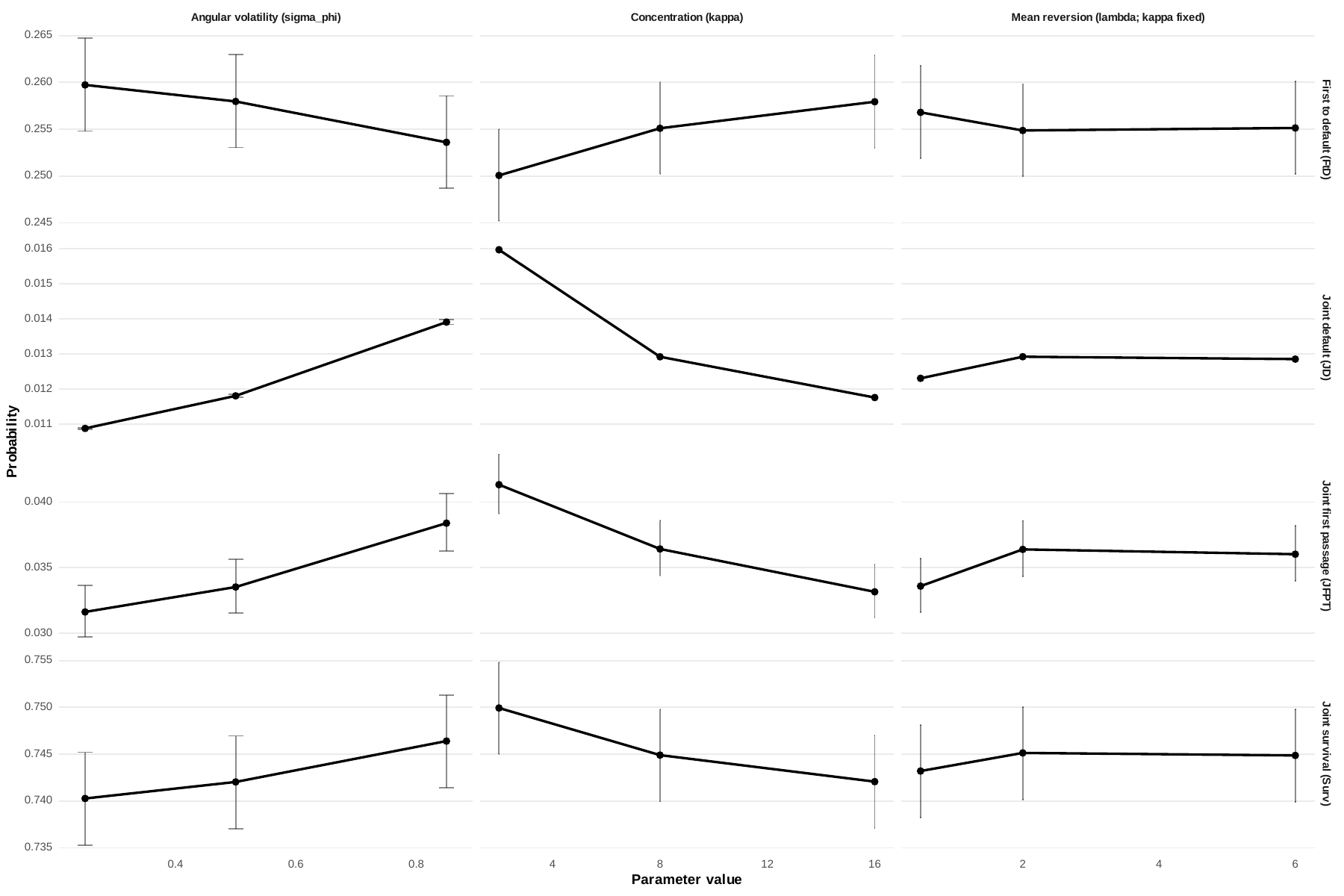}
\caption{Effects of angular volatility, concentration, and mean-reversion
speed under the von Mises process.  The concentration experiment holds
\(\sigma_\varphi\) fixed, while the mean-reversion experiment holds
\(\kappa\) fixed.  The figure shows 95\% intervals.}
\label{fig:simulation-sensitivity-dynamics}
\end{figure}

Figure~\ref{fig:simulation-sensitivity-mean-correlation}
demonstrates
the effect of correlation parameter on its effect on dependence.

\begin{figure}[H]
\centering
\includegraphics[width=0.98\textwidth]{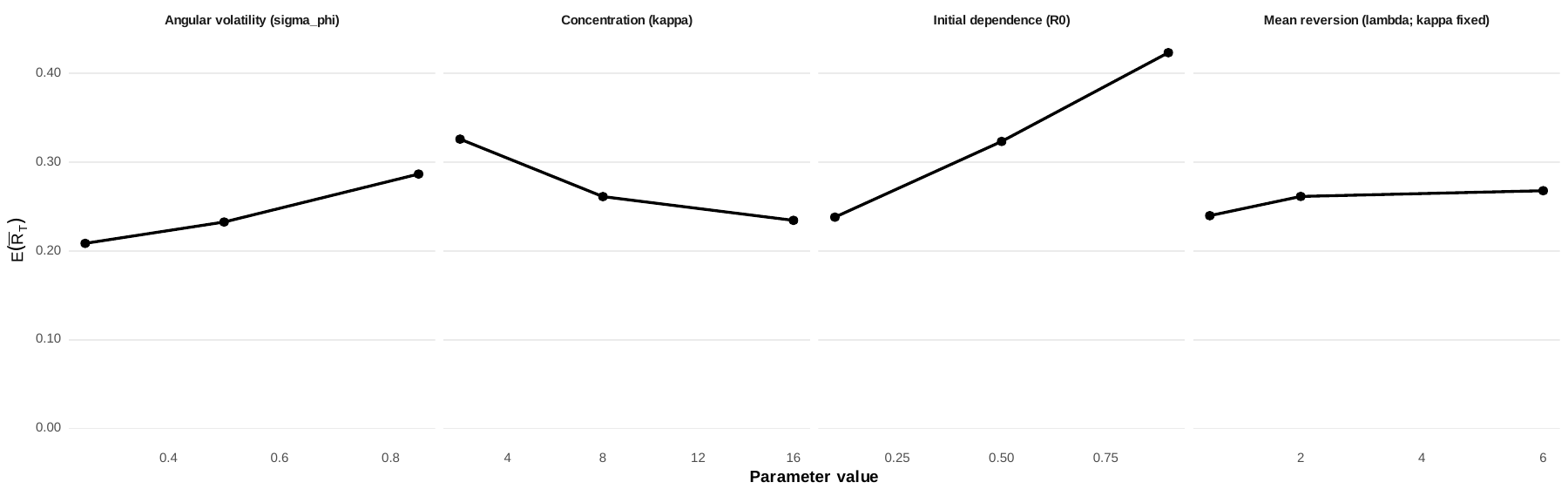}
\caption{Effects of correlation process parameters on mean two year
time averaged correlation under the von Mises process.  Each point is the sample
mean of \(\overline R_T\) across \(30{,}000\) paths.}
\label{fig:simulation-sensitivity-mean-correlation}
\end{figure}

\subsection{Portfolio VaR and expected shortfall}

\label{subsec:simulation-var-es}

For an infinitely granular portfolio with unit loss given default, let
\(L_T\) denote the terminal portfolio-loss fraction. We summarize its upper
tail by
\(\operatorname{VaR}_{\alpha}(L_T)\), the \(\alpha\)-quantile of \(L_T\), and
\[
\operatorname{ES}_{\alpha}(L_T)
=
\mathbb E\!\left[
L_T\mid L_T\geq \operatorname{VaR}_{\alpha}(L_T)
\right].
\]
To obtain these quantities, we combine the finite-horizon approximation to the
law of \(\overline R_T\) developed in
Section~\ref{sec:time-averaged-correlation} with the conditional Vasicek loss
distribution. Since a Gaussian approximation can take values outside the correlation interval, it is truncated and renormalized on \((0,1)\). Denoting
the resulting density by \(\widetilde h_T\), the approximate terminal-loss CDF
is
\[
\widetilde F_{L_T}(x)
=
\int_0^1
F_{\mathrm V}(x;p,\rho)\,
\widetilde h_T(\rho)\,d\rho,
\qquad 0<x<1.
\]
The corresponding VaR is obtained by numerical inversion of
\(\widetilde F_{L_T}\), and expected shortfall is evaluated by numerical
integration over the loss tail.

Figure~\ref{fig:var-es-validation} compares the approximate VaR with a direct
Monte Carlo benchmark at the 95\%, 99\%, and 99.9\% levels over the four
horizons. Points on the dashed line indicate exact agreement. For the von
Mises process, the approximation is already reasonably accurate at the shorter
horizons and becomes particularly close from one year onward. This improvement
is consistent with the averaging effect in \(\overline R_T\): as the horizon
increases, the distribution of the additive functional becomes smoother and
the Gaussian approximation becomes more informative.

The Circular Brownian motion approximation displays larger short-horizon
discrepancies, most visibly at the 99\% level. Under its stationary law, the
instantaneous correlation has substantial mass near the boundaries, while a
short-horizon average remains bounded and can be markedly non-Gaussian. A
truncated normal distribution captures the support restriction but not all of
this boundary behaviour. At the 99.9\% level, several simulated and approximate
VaR values are close to the maximal loss of one. 

\begin{figure}[H]
\centering
\includegraphics[width=0.92\textwidth]{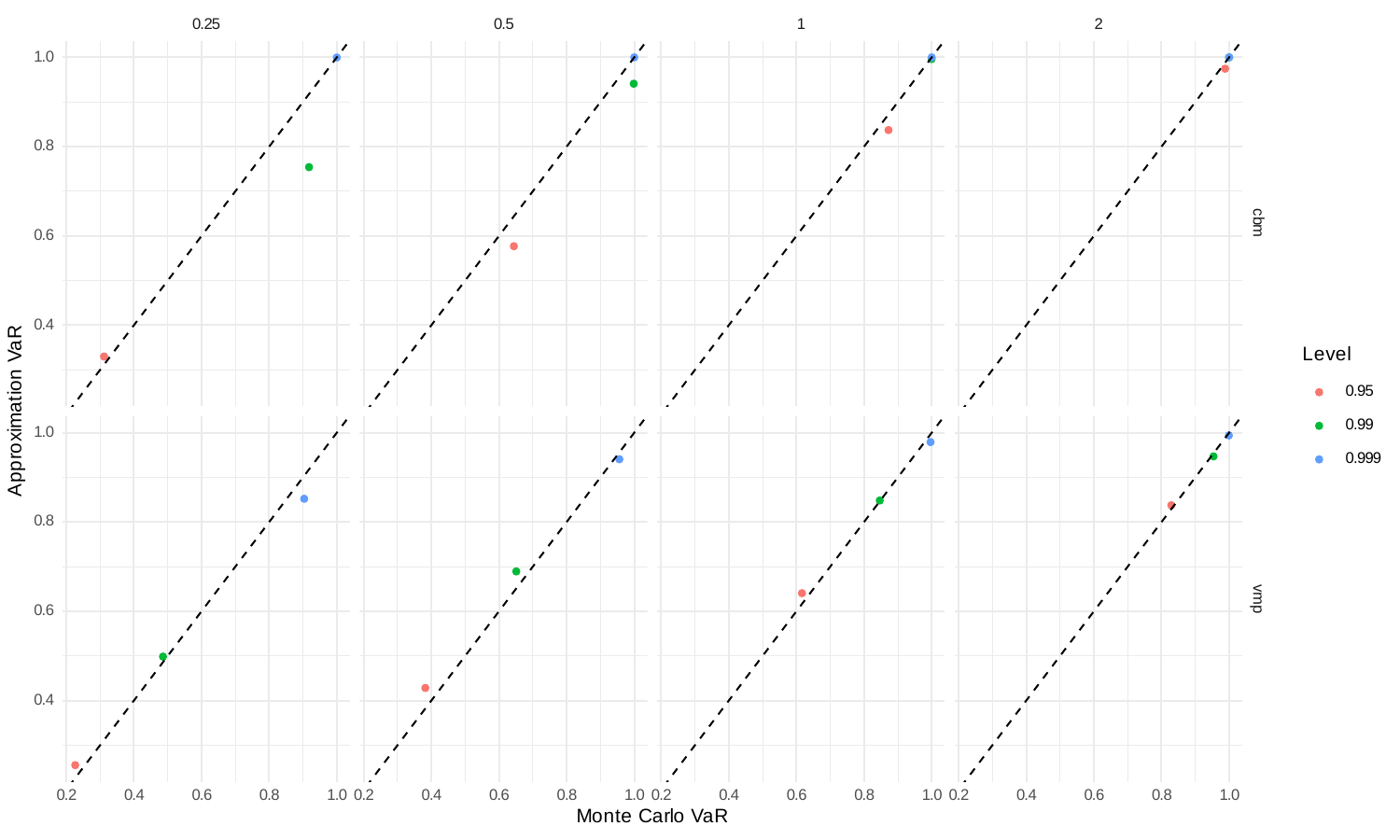}
\caption{Approximate and Monte Carlo terminal portfolio loss VaR under
Circular Brownian motion and the von Mises process. Points report the 95\%,
99\%, and 99.9\% levels for the simulation baseline; the dashed line denotes
equality.Points correspond to the 95\%, 99\%, and 99.9\% VaR levels under the baseline simulation, while the dashed 45-degree line indicates exact agreement between the approximation and Monte Carlo estimates.}
\label{fig:var-es-validation}
\end{figure}

\section{Empirical illustration: U.S. bank charge-off rates}
\label{sec:empirical}

This section applies the stochastic-correlation framework to quarterly U.S.
bank charge-off rates. For each loan category, we estimate a static Vasicek
model, Circular Brownian motion, and the von Mises process using the Vasicek
loss density as a quasi-likelihood. We compare their in-sample fit, examine the
smoothed correlation paths implied by the dynamic models, and use the fitted
von Mises specification to calculate illustrative two-year credit-event
probabilities. Since net charge-offs are accounting loss flows rather than
obligor-level default frequencies, the fitted states are interpreted as
effective category-level correlations within the observation model.

We use the Federal Reserve Board release \emph{Charge-Off and Delinquency
Rates on Loans and Leases at Commercial Banks}.\footnote{Data source:
\url{https://www.federalreserve.gov/releases/chargeoff/}.} The same data source
has previously been studied in a credit-risk setting by
\cite{metzler2020state}. The sample contains quarterly net charge-off rates for
eleven loan categories through the third quarter of 2025. Published rates are
annualized percentages, which we convert to quarterly fractions by dividing by
\(400\).

Net charge-offs combine newly recognized charge-offs with recoveries on loans
charged off in earlier periods. They are therefore accounting loss flows, not
direct observations of obligor default frequencies. Some reports are
nonpositive because recoveries exceed contemporaneous charge-offs. We retain
the economic information in these observations rather than recoding them as
zero. Since the Vasicek loss density has support \((0,1)\), the baseline
quasi-likelihood treats a nonpositive report as indicating a model-implied loss
fraction no greater than
\[
\delta_c
=
\frac{0.01}{100\times4\times2}
=
1.25\times10^{-5},
\]
which is half the smallest quarterly increment implied by the reporting
precision. As robustness checks, we alternatively omit these observations or
replace them by \(\delta_c\). Accordingly, the fitted state should be viewed as
an effective correlation measure for category-level loss outcomes, rather than
as a direct estimate of obligor-level asset correlation.

\subsection{Vasicek quasi-likelihood and estimation}
\label{subsec:empirical-likelihood}

\paragraph{Model specifications.}

For each loan category \(j\), we estimate a static Vasicek model and two
stochastic-correlation models. In the static specification,
\[
\vartheta_{j,\mathrm{static}}=(p_j,\rho_j),
\]
where \(p_j\) is the mean quarterly loss rate and \(\rho_j\) is a constant
correlation parameter. Under Circular Brownian motion,
\[
d\varphi_{j,t}
=
\sigma_{\varphi,j}\,dB_{j,t},
\qquad
\vartheta_{j,\mathrm{CBM}}
=
(p_j,\sigma_{\varphi,j}).
\]
Under the von Mises specification,
\[
d\varphi_{j,t}
=
-\lambda_j\sin(\varphi_{j,t}-\mu_{\varphi,j})\,dt
+
\sigma_{\varphi,j}\,dB_{j,t},
\]
with parameter vector
\[
\vartheta_{j,\mathrm{vM}}
=
(p_j,\lambda_j,\sigma_{\varphi,j},\mu_{\varphi,j}).
\]
Both dynamic models use
\[
R_{j,t}=\cos^2(\varphi_{j,t}).
\]
The Vasicek density is used as a quasi-likelihood observation model; the
specification does not require each loan category to be a literal homogeneous
portfolio of obligor defaults.

\paragraph{Observation contributions.}

Let \(c_{j,k}\) denote the reported quarterly loss fraction for category \(j\)
in quarter \(k\), and let \(\Delta=0.25\) years. For either stochastic-
correlation specification, define the quarterly time-averaged correlation by
\[
\overline R_{j,k}
=
\frac{1}{\Delta}
\int_{(k-1)\Delta}^{k\Delta}
\cos^2(\varphi_{j,s})\,ds.
\]
Applying Proposition~\ref{prop:vasicek-density-stochastic-correlation} over the
\(k\)th quarter, the model-implied loss fraction, conditional on the
correlation path, has the Vasicek distribution with parameters \(p_j\) and
\(\overline R_{j,k}\). Its density is given by
\eqref{eq:vasicek-density-kernel}.

A strictly positive report is treated as an exact realization of this
continuous loss variable and contributes
\[
f(c_{j,k};p_j,\overline R_{j,k})
\]
to the quasi-likelihood. In the baseline specification, a nonpositive report
is represented by the event that the model-implied loss fraction does not
exceed \(\delta_c\). Its contribution is therefore
\[
\begin{aligned}
F_{\mathrm V}(\delta_c;p_j,\overline R_{j,k})
&=
\int_0^{\delta_c}
f(x;p_j,\overline R_{j,k})\,dx \\
&=
\Phi\left(
\frac{
\sqrt{1-\overline R_{j,k}}\,\Phi^{-1}(\delta_c)
-
\Phi^{-1}(p_j)
}{
\sqrt{\overline R_{j,k}}
}
\right).
\end{aligned}
\]
Thus the observation contribution at a generic correlation level \(\rho\) is
\[
g_{j,k}(\rho;p_j)
=
\begin{cases}
f(c_{j,k};p_j,\rho), & c_{j,k}>0,\\[4pt]
F_{\mathrm V}(\delta_c;p_j,\rho), & c_{j,k}\leq0.
\end{cases}
\]
For a dynamic model this expression is evaluated at
\(\rho=\overline R_{j,k}\); for the static Vasicek model it is evaluated at
\(\rho=\rho_j\).

\paragraph{Particle-filter estimation.}

For the stochastic-correlation specifications, each quarterly observation
contribution must be averaged over the conditional distribution of the
unobserved angular path. This integral is unavailable in closed form, so we
approximate it using a bootstrap particle filter
\parencite{kitagawa1996monte,doucet2011tutorial}.

Let \(\mathcal F_{j,k-1}\) denote the information in the observations for
category \(j\) through quarter \(k-1\). Under model \(\mathcal M\), the
predictive contribution of observation \(k\) is
\[
L_{j,k,\mathcal M}(\vartheta_{j,\mathcal M})
=
\mathbb E_{\vartheta_{j,\mathcal M}}
\left[
g_{j,k}\!\left(\overline R_{j,k};\vartheta_{j,\mathcal M}\right)
\middle|
\mathcal F_{j,k-1}
\right].
\]
The particle filter approximates the predictive distribution of
\(\overline R_{j,k}\) by
\[
\left\{
\overline R_{j,k}^{(m)},
w_{j,k|k-1}^{(m)}
\right\}_{m=1}^{N},
\qquad
\sum_{m=1}^{N}w_{j,k|k-1}^{(m)}=1.
\]
Consequently,
\[
\widehat L_{j,k,\mathcal M}(\vartheta_{j,\mathcal M})
=
\sum_{m=1}^{N}
w_{j,k|k-1}^{(m)}
g_{j,k}\!\left(
\overline R_{j,k}^{(m)};
\vartheta_{j,\mathcal M}
\right),
\]
and the particle approximation to the marginal quasi-log-likelihood is
\begin{equation}
\widehat l_{j,\mathcal M}(\vartheta_{j,\mathcal M})
=
\sum_{k=1}^{K_j}
\log\widehat L_{j,k,\mathcal M}(\vartheta_{j,\mathcal M}).
\label{eq:particle-likelihood}
\end{equation}

Within each quarter, particles evolve according to the relevant circular
diffusion, and \(\overline R_{j,k}^{(m)}\) is computed from the simulated
within-quarter path. The contribution \(g_{j,k}\) is then used to update and
normalize the particle weights. Systematic resampling is performed whenever
\[
\operatorname{ESS}_{j,k}
=
\left\{
\sum_{m=1}^{N}
\bigl(w_{j,k}^{(m)}\bigr)^2
\right\}^{-1}
<
\frac{N}{2}.
\]

For Circular Brownian motion, initial particles are drawn from the uniform
invariant distribution on the circle. For the von Mises process, they are
drawn from its invariant distribution with concentration parameter $\kappa_j
=
\frac{2\lambda_j}{\sigma_{\varphi,j}^{2}}$.
Particles in both models evolve on the full circle. Since
\(R=\cos^2\varphi\) is invariant under several transformations of
\(\varphi\), distinct values of \(\mu_{\varphi,j}\) can generate
observationally equivalent specifications. We therefore impose
\(\mu_{\varphi,j}\in[0,\pi/2]\) to select a single representative from these
equivalent parameterizations.

The numerical search is conducted over $p_j\in[10^{-6},0.25],
\sigma_{\varphi,j}\in[0.05,5]$
and, for the von Mises process, $
\lambda_j\in[0.05,10],
\mu_{\varphi,j}\in[0,\pi/2]$.
Each quarterly transition is approximated using 20 Euler--Maruyama substeps,
and each likelihood evaluation during optimization uses 1,500 particles. We
employ a bounded Nelder--Mead search with six multistarts and fixed common
random numbers, followed by a continuation run from the best solution. The
fitted parameters are reevaluated using 6,000 particles. State summaries use
2,500 trajectories obtained by sampling terminal particles and tracing their
recorded ancestors backward through the particle genealogy
\parencite{kitagawa1996monte,doucet2011tutorial}.

\paragraph{Model comparison.}

Let \(\widetilde l_{j,\mathcal M}\) denote the 6,000-particle reevaluation
of the quasi-log-likelihood at the estimated parameter vector, and let
\(q_{\mathcal M}\) denote the number of estimated parameters. We report
\[
\mathrm{AIC}_{j,\mathcal M}
=
2q_{\mathcal M}-2\widetilde l_{j,\mathcal M},
\qquad
\mathrm{BIC}_{j,\mathcal M}
=
q_{\mathcal M}\log K_j-2\widetilde l_{j,\mathcal M},
\]
where  $q_{\mathrm{static}}=q_{\mathrm{CBM}}=2,
q_{\mathrm{vM}}=4$.
All three models use the same observations and the same treatment of
nonpositive reports. The static criterion is evaluated deterministically,
whereas the dynamic criteria depend on particle approximations. We therefore
repeat the dynamic quasi-likelihood evaluation at each fitted parameter vector
to assess Monte Carlo variation in the reported criteria.

\subsection{Estimated correlation dynamics}
\label{subsec:empirical-correlation}

Table~\ref{tab:empirical-model-comparison} compares the static and dynamic
models using quasi-information criteria. Both stochastic-correlation
specifications produce lower AIC and BIC values than the static Vasicek model
in every category. Among the two dynamic models, BIC favors Circular Brownian
motion in all eleven categories, although the farmland values are nearly tied
at the displayed precision. AIC favors the von Mises process in six categories
and Circular Brownian motion in five.

\begin{table}[H]
\centering
\caption{Quasi-likelihood comparison of static and stochastic correlation models}
\label{tab:empirical-model-comparison}
\begin{adjustbox}{max width=\textwidth}
\begin{tabular}{lrrrrrr}
\toprule
& \multicolumn{2}{c}{Static Vasicek} & \multicolumn{2}{c}{Circular Brownian motion} & \multicolumn{2}{c}{von Mises process}\\
\cmidrule(lr){2-3}\cmidrule(lr){4-5}\cmidrule(lr){6-7}
Category & AIC & BIC & AIC & BIC & AIC & BIC\\
\midrule
Agricultural loans & -1838.3 & -1832.1 & -1935.0 & -1928.8 & -1934.3 & -1921.9\\
All consumer loans & -1552.0 & -1545.8 & -1662.1 & -1655.9 & -1663.1 & -1650.7\\
All real estate loans & -1803.8 & -1797.6 & -1912.3 & -1906.1 & -1911.0 & -1898.6\\
Commercial and industrial & -1754.5 & -1748.3 & -1785.5 & -1779.3 & -1786.1 & -1773.7\\
Commercial real estate & -1419.4 & -1413.6 & -1476.9 & -1471.1 & -1467.6 & -1455.8\\
Credit cards & -1388.8 & -1382.6 & -1451.3 & -1445.1 & -1452.7 & -1440.3\\
Farmland & -1828.2 & -1822.3 & -1842.4 & -1836.6 & -1848.0 & -1836.3\\
Leases & -1928.8 & -1922.6 & -1960.9 & -1954.8 & -1962.9 & -1950.5\\
Other consumer loans & -1780.8 & -1774.6 & -1867.1 & -1860.9 & -1867.1 & -1854.8\\
Residential real estate & -1175.0 & -1169.2 & -1364.0 & -1358.2 & -1359.6 & -1347.9\\
Total loans and leases & -1778.9 & -1772.7 & -1839.3 & -1833.1 & -1817.7 & -1805.4\\
\bottomrule
\end{tabular}
\end{adjustbox}
\begin{minipage}{0.98\textwidth}
\footnotesize
\emph{Notes:} Lower values are preferred.
\end{minipage}
\end{table}

\begin{table}[H]
\centering
\caption{Monte Carlo variability of particle quasi-log-likelihood evaluations}
\label{tab:particle-repeat-variability}
\begin{tabular}{lrr}
\toprule
Category & CBM SD\((\widehat l)\) & von Mises SD\((\widehat l)\)\\
\midrule
Agricultural loans & .15 & .24\\
All consumer loans & .26 & .25\\
All real estate loans & .09 & .07\\
Commercial and industrial & .12 & .02\\
Commercial real estate & .06 & .40\\
Credit cards & .15 & .16\\
Farmland & .13 & .09\\
Leases & .14 & .12\\
Other consumer loans & .12 & .08\\
Residential real estate & .10 & .09\\
Total loans and leases & .09 & .14\\
\bottomrule
\end{tabular}
\begin{minipage}{0.98\textwidth}
\footnotesize
\emph{Notes:} Each value is the standard deviation of four independent
6,000-particle, 20-substep quasi-likelihood evaluations at the reported point
estimate.
\end{minipage}
\end{table}

Table~\ref{tab:particle-repeat-variability} reports the Monte Carlo variation
in repeated particle-likelihood evaluations at the fitted parameters. This
variation is small relative to the differences between the static Vasicek
model and either stochastic-correlation model, so the improvement of the
dynamic models is not attributable to particle-filter noise. For a few
categories, however, the difference between the Circular Brownian motion and
von Mises criteria is of the same order as the repeat variation, and their
relative ranking should therefore be read descriptively. The von Mises fit to
commercial real estate has the largest repeat standard deviation. This reduces
the numerical precision of its reported criteria, but the variation is too
small to change the model ranking for that category.

Table~\ref{tab:static-estimates} reports the static Vasicek quasi-MLEs used in
the information-criterion comparison. The fitted loss and correlation
parameters display different cross-category patterns. Credit cards have the
largest \(\widehat p\), whereas the largest static correlation estimates occur
for residential real estate, commercial real estate, and all real estate
loans. Thus a higher average loss rate need not be accompanied by a higher
fitted correlation.

\begin{table}[H]
\centering
\caption{Static Vasicek quasi-MLEs}
\label{tab:static-estimates}
\begin{tabular}{lrr}
\toprule
Category & \(\widehat p\) & \(\widehat\rho\)\\
\midrule
Agricultural loans & .001045 & .114\\
All consumer loans & .005887 & .017\\
All real estate loans & .001133 & .157\\
Commercial and industrial & .001891 & .054\\
Commercial real estate & .001330 & .227\\
Credit cards & .010629 & .016\\
Farmland & .000262 & .096\\
Leases & .001071 & .057\\
Other consumer loans & .002652 & .018\\
Residential real estate & .001236 & .256\\
Total loans and leases & .002135 & .029\\
\bottomrule
\end{tabular}
\begin{minipage}{0.98\textwidth}
\footnotesize
\emph{Notes:} These estimates are used in the static correlation model
quasi-information criteria reported in
Table~\ref{tab:empirical-model-comparison}.
\end{minipage}
\end{table}

\begin{table}[H]
\centering
\caption{Stochastic correlation quasi-MLEs and smoothed correlation summaries}
\label{tab:empirical-correlation}
\begin{adjustbox}{max width=\textwidth}
\scriptsize
\begin{tabular}{lrrrrrrrrr}
\toprule
Category & CBM \(\widehat p\) & CBM \(\widehat\sigma_\varphi\) & vM \(\widehat p\) &
\(\widehat\lambda\) & \(\widehat\sigma_\varphi\) & \(\widehat\mu_\varphi\) & \(\widehat\kappa\) & vM mean \(\widetilde{\overline R}\) & vM \(q_{0.95}\)\\
\midrule
Commercial real estate & .004281 & .149 & .004407 & .350 & .152 & .963 & 30.36 & .286 & .516\\
Residential real estate & .000475 & .114 & .000485 & .061 & .118 & 1.183 & 8.65 & .184 & .693\\
All real estate loans & .001136 & .082 & .001129 & .108 & .079 & 1.239 & 34.98 & .129 & .330\\
Farmland & .000277 & .121 & .000258 & .924 & .172 & 1.285 & 62.57 & .086 & .160\\
Commercial and industrial & .003175 & .092 & .003162 & .185 & .098 & 1.346 & 38.71 & .078 & .157\\
Agricultural loans & .000761 & .149 & .000739 & .609 & .239 & 1.372 & 21.26 & .074 & .263\\
Leases & .001455 & .062 & .001343 & .322 & .071 & 1.344 & 128.05 & .058 & .116\\
Total loans and leases & .001574 & .064 & .002423 & .311 & .066 & 1.421 & 142.48 & .031 & .066\\
Other consumer loans & .002453 & .061 & .002496 & .512 & .075 & 1.458 & 181.61 & .014 & .048\\
Credit cards & .009335 & .059 & .009359 & .077 & .070 & 1.541 & 31.71 & .012 & .038\\
All consumer loans & .005653 & .056 & .005647 & .070 & .057 & 1.548 & 43.30 & .012 & .042\\
\bottomrule
\end{tabular}
\end{adjustbox}
\end{table}

Table~\ref{tab:empirical-correlation} reports the quasi-MLEs for both correlation
models. For the von Mises process, the last two columns summarize the fitted
sequence of quarterly time-averaged correlations. The reported mean is the
time average of the quarterly smoothed medians, and \(q_{0.95}\) is the
temporal 95th percentile of that sequence. The real estate categories have
the highest average fitted correlations: commercial real estate ranks first,
followed by residential real estate and all real estate loans. Residential
real estate has the largest temporal upper tail, indicating that its fitted
correlation reaches high levels in some quarters even though its average is
below that of commercial real estate. By contrast, the consumer-loan series
have low average fitted correlations. This ordering differs from that of
\(\widehat p_j\), i.e., credit cards have the largest fitted loss parameter but one
of the lowest fitted correlations.

For the von Mises process,
\[
\widehat\kappa_j
=
\frac{2\widehat\lambda_j}{\widehat\sigma_{\varphi,j}^{\,2}}
\]
determines the stationary concentration of the angular process around
\(\widehat\mu_{\varphi,j}\). A larger value implies that
\(\varphi_{j,t}\) is more tightly concentrated around its stationary
location and, through the mapping \(R_{j,t}=\cos^2(\varphi_{j,t})\), produces a
more concentrated fitted correlation distribution around
\(\cos^2(\widehat\mu_{\varphi,j})\). The speed of mean reversion is controlled
separately by \(\widehat\lambda_j\).

Figure~\ref{fig:empirical-correlation} plots the smoothed correlation paths
for the four real estate categories. Circular Brownian motion and the von
Mises process exhibit broadly similar movements over time, although the
estimated levels differ. Commercial real estate shows several periods of
elevated correlation, while residential real estate rises sharply toward the
end of the sample. The increase for all real estate loans is more gradual,
whereas farmland remains at comparatively lower
levels. The late rise
in residential real estate should be interpreted cautiously because its
estimated path is sensitive to the treatment of nonpositive charge-off
observations, as examined next.

\begin{figure}[H]
\centering
\includegraphics[width=0.94\textwidth]{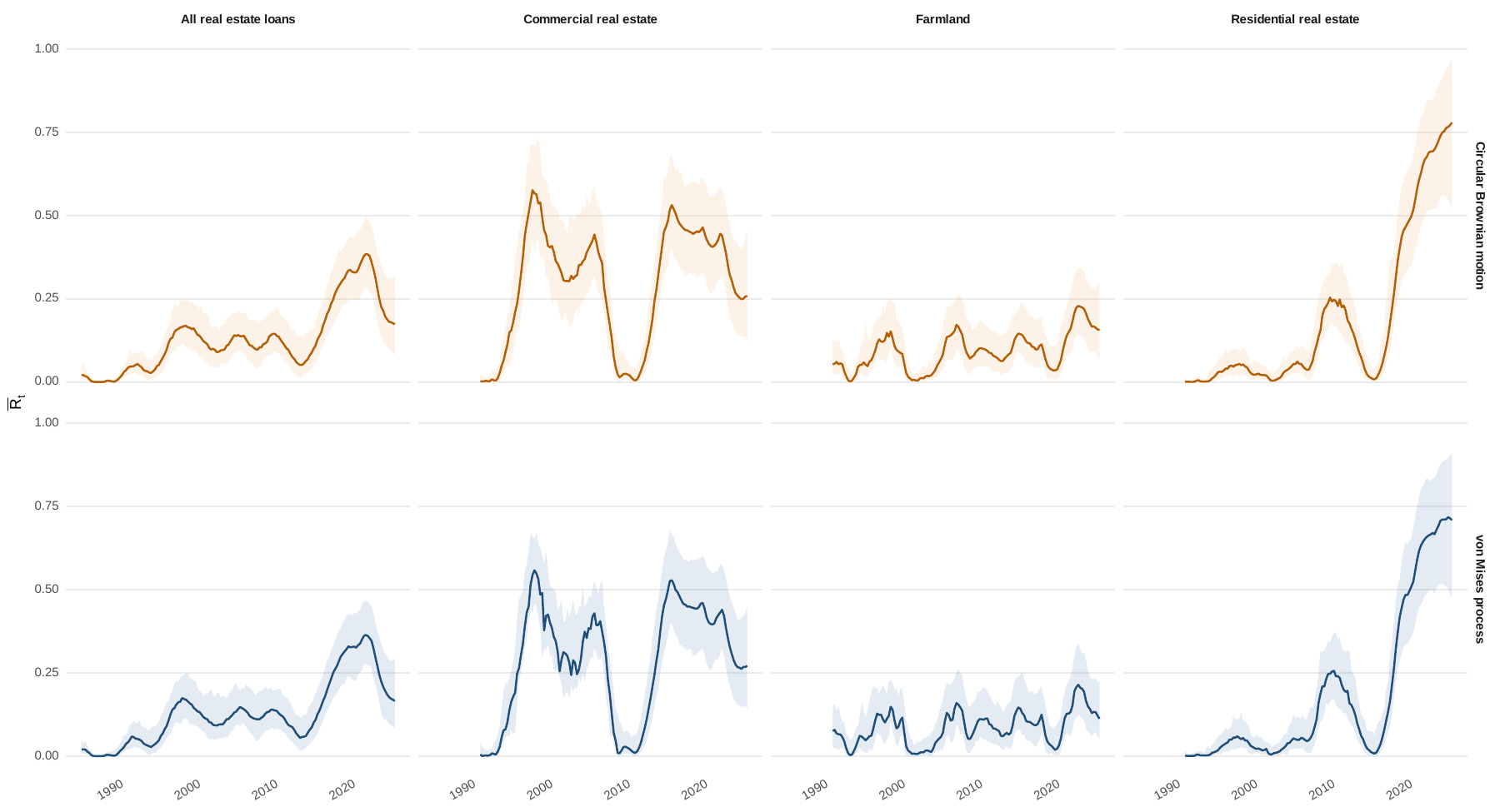}
\caption{Smoothed quarterly time-averaged correlation for real estate
categories under Circular Brownian motion and the von Mises process. The
figure shows pointwise 95\% smoothing intervals conditional on the fitted
parameters.}
\label{fig:empirical-correlation}
\end{figure}

\begin{table}[H]
\centering
\caption{Residential real estate sensitivity to treatment of nonpositive charge-offs}
\label{tab:censoring-robustness}
\begin{adjustbox}{max width=\textwidth}
\begin{tabular}{llrrrrrr}
\toprule
Process & Treatment & \(\widehat p\) & \(\widehat\sigma_\varphi\) &
\(\widehat\lambda\) & \(\widehat\kappa\) & Mean \(\widetilde{\overline R}\) & \(q_{0.95}\)\\
\midrule
Circular Brownian motion & Left censored & .000475 & .114 & --- & --- & .187 & .725\\
Circular Brownian motion & Omit & .000485 & .082 & --- & --- & .098 & .239\\
Circular Brownian motion & Floor & .000481 & .084 & --- & --- & .113 & .295\\
von Mises process & Left censored & .000485 & .118 & .061 & 8.65 & .184 & .693\\
von Mises process & Omit & .000489 & .087 & .085 & 22.60 & .095 & .215\\
von Mises process & Floor & .000489 & .091 & .058 & 13.90 & .113 & .298\\
\bottomrule
\end{tabular}
\end{adjustbox}
\begin{minipage}{0.98\textwidth}
\footnotesize
\emph{Notes:} ``Left censored'' is the baseline specification. ``Floor''
replaces a nonpositive report by \(\delta_c\).
\end{minipage}
\end{table}

Table~\ref{tab:censoring-robustness} compares the baseline left-censoring
treatment with omission and flooring for residential real estate. The
estimated quarterly loss parameter is almost unchanged across the three
treatments, but the fitted correlation dynamics are not. Under both circular
models, omitting or flooring the nonpositive observations produces a lower
average correlation and a substantially lower temporal upper tail than left
censoring. Higher correlation makes the Vasicek loss distribution more
dispersed, placing greater probability on both very low and very high loss
outcomes. Consequently, recovery-dominated reports can support a higher fitted
correlation when they enter the quasi-likelihood as left-censored
observations. The residential correlation path should therefore be
interpreted cautiously because it depends on how nonpositive reports are
treated. The same qualitative pattern appears in the other categories that
contain such observations.

\paragraph{Illustrative two-year probabilities.}

To illustrate how the fitted correlation dynamics translate into credit-event
probabilities, we consider two identical hypothetical exposures for each loan
category. We report the von Mises calculations because this richer
specification allows mean reversion and a nonuniform stationary correlation
distribution; the exercise is illustrative and is not intended as a
model-selection claim. The estimated quarterly loss parameter \(\widehat p_j\)
is treated as a quarterly default-probability proxy and converted into a
two-year marginal probability,
\[
p_{j,H}
=
1-(1-\widehat p_j)^{4H},
\qquad H=2.
\]
We set \(B=100\), \(\mu=0.05\), and \(\sigma=0.15\), and choose the initial
asset-to-barrier ratio as
\[
\log(S_{j,0}/B)
=
-\Phi^{-1}(p_{j,H})\sigma\sqrt H
-
\left(\mu-\frac12\sigma^2\right)H,
\]
so that each exposure has marginal terminal default probability \(p_{j,H}\).
The correlation process is initialized from its fitted 2025Q3 filtering
distribution and evolved under the estimated von Mises dynamics. These
calculations provide a common structural illustration across categories and
should not be interpreted as forecasts for representative firms or loan
portfolios.

\begin{table}[H]
\centering
\caption{Two-year probabilities under the filtered von Mises process}
\label{tab:empirical-scenarios}
\begin{adjustbox}{max width=\textwidth}
\begin{tabular}{lrrrrr}
\toprule
Category & \(p_H\) &
\(p_{\mathrm{JD}}\) & \(p_{\mathrm{Surv}}\) &
\(p_{\mathrm{FtD}}\) & \(p_{\mathrm{JFPT}}\)\\
\midrule
Credit cards                & 0.07247 & 0.005468 & 0.68462 & 0.31538 & 0.02970\\
Commercial real estate      & 0.03472 & 0.004167 & 0.85584 & 0.14416 & 0.01540\\
All consumer loans          & 0.04429 & 0.002104 & 0.80718 & 0.19282 & 0.01138\\
Commercial and industrial   & 0.02501 & 0.000997 & 0.89000 & 0.11000 & 0.004080\\
Residential real estate     & 0.003871 & 0.000884 & 0.98536 & 0.01464 & 0.002420\\
Total loans and leases      & 0.01922 & 0.000443 & 0.91738 & 0.08262 & 0.001920\\
Other consumer loans        & 0.01979 & 0.000425 & 0.91404 & 0.08596 & 0.001840\\
All real estate loans       & 0.008996 & 0.000254 & 0.96284 & 0.03716 & 0.000720\\
Leases                      & 0.01069 & 0.000168 & 0.95284 & 0.04716 & 0.000540\\
Agricultural loans          & 0.005900 & 0.0000666 & 0.97398 & 0.02602 & 0.000220\\
Farmland                    & 0.002064 & 0.0000122 & 0.99214 & 0.00786 & 0.000080\\
\bottomrule
\end{tabular}
\end{adjustbox}
\begin{minipage}{0.98\textwidth}
\footnotesize
\emph{Notes:} The calculations use the fitted von Mises process initialized
from its 2025Q3 filtering distribution. The structural parameters are
\(H=2\), \(B=100\), \(\mu=0.05\), and \(\sigma=0.15\). Probabilities are
estimated using \(M=50{,}000\) paths and monitoring interval
\(\Delta t=1/504\) year. \(p_{\mathrm{JD}}\) is the average conditional
terminal joint-default probability; the remaining columns are path event
frequencies. 
\end{minipage}
\end{table}

Table~\ref{tab:empirical-scenarios}
illustrates
the roles of
marginal risk and correlation. The horizon probability \(p_H\) largely
determines the overall scale of default risk with credit cards having the largest
first-to-default probability, whereas farmland has the smallest. Joint-event
probabilities also reflect the fitted correlation dynamics. Commercial real
estate, for example, has a lower \(p_H\) than all consumer loans but higher
terminal joint-default and joint-first-passage probabilities. Residential
real estate has a relatively high fitted correlation state, but its low
marginal calibration keeps its joint-event probabilities small in absolute
terms.

The quasi-information criteria provide consistent evidence in favor of allowing the correlation state to
vary over time. The fitted paths differ across categories, with the real
estate time series generally displaying higher correlations than the consumer-loan
series. The residential results also show that the inferred path can be
sensitive to the treatment of nonpositive charge-off observations. Finally,
the two-year illustration shows that the
fitted marginal probability determines the overall scale of credit risk,
while the correlation dynamics influence joint default and joint barrier
crossing. 

\subsection{Portfolio VaR and expected shortfall for charge-off data}
\label{subsec:empirical-var-es}

For category \(j\), the fitted quarterly loss parameter is first compounded to
a two-year marginal calibration,
\[
p_{j,2}
=
1-(1-\widehat p_j)^8.
\]
Let \(\widetilde h_{j,2}\) denote the truncated and renormalized approximation
to the stationary two-year average correlation under the fitted process. The
corresponding unconditional terminal-loss distribution is
\[
\widetilde F_{j,2}(x)
=
\int_0^1
F_{\mathrm V}(x;p_{j,2},\rho)\,
\widetilde h_{j,2}(\rho)\,d\rho.
\]
We obtain VaR by numerical inversion and expected shortfall by integrating the
loss distribution beyond the VaR threshold. Figure~\ref{fig:empirical-var-es}
focuses on the 99\% and 99.9\% levels, where the effect of the dependence
specification is most visible.

The two fitted processes imply different long-run tail loss
profiles. For Circular Brownian motion, the invariant angular distribution is
uniform and the stationary mean correlation is therefore one half for every
loan category. The fitted angular volatility changes the persistence and the
finite-horizon dispersion of \(\overline R_T\), but not this stationary mean.
Consequently, the Circular Brownian motion specification imposes a common,
relatively high long-run dependence level, and produces very severe tail losses
for categories with moderate or high compounded loss probabilities. At the
99.9\% level, expected shortfall is close to one for credit cards, all consumer
loans, commercial real estate, and commercial and industrial loans. In the
asymptotic unit-loss-given-default portfolio, this means that conditional on
entering the most adverse 0.1\% of model outcomes, losses are concentrated near
the entire portfolio.

The von Mises process permits category-specific long-run locations,
concentrations, and persistence, and therefore generates more differentiated
and generally more moderate tail-loss estimates. At the 99.9\% level, expected
shortfall is 0.289 for credit cards, 0.219 for commercial and industrial loans,
0.251 for residential real estate, and 0.712 for commercial real estate. The
commercial real estate result is especially informative, i.e., its fitted marginal
loss parameter is below that of credit cards, but its stronger long-run
dependence produces a substantially larger extreme-tail loss. Conversely,
credit cards combine the highest fitted marginal loss rate with comparatively
low fitted correlation, so their tail risk is driven more by marginal credit
quality than by default clustering. Residential real estate illustrates the
opposite phenomena, despite a low marginal calibration, high-correlation
can still generate non-negligible conditional tail losses.
\begin{figure}[H]
\centering
\includegraphics[width=0.94\textwidth]{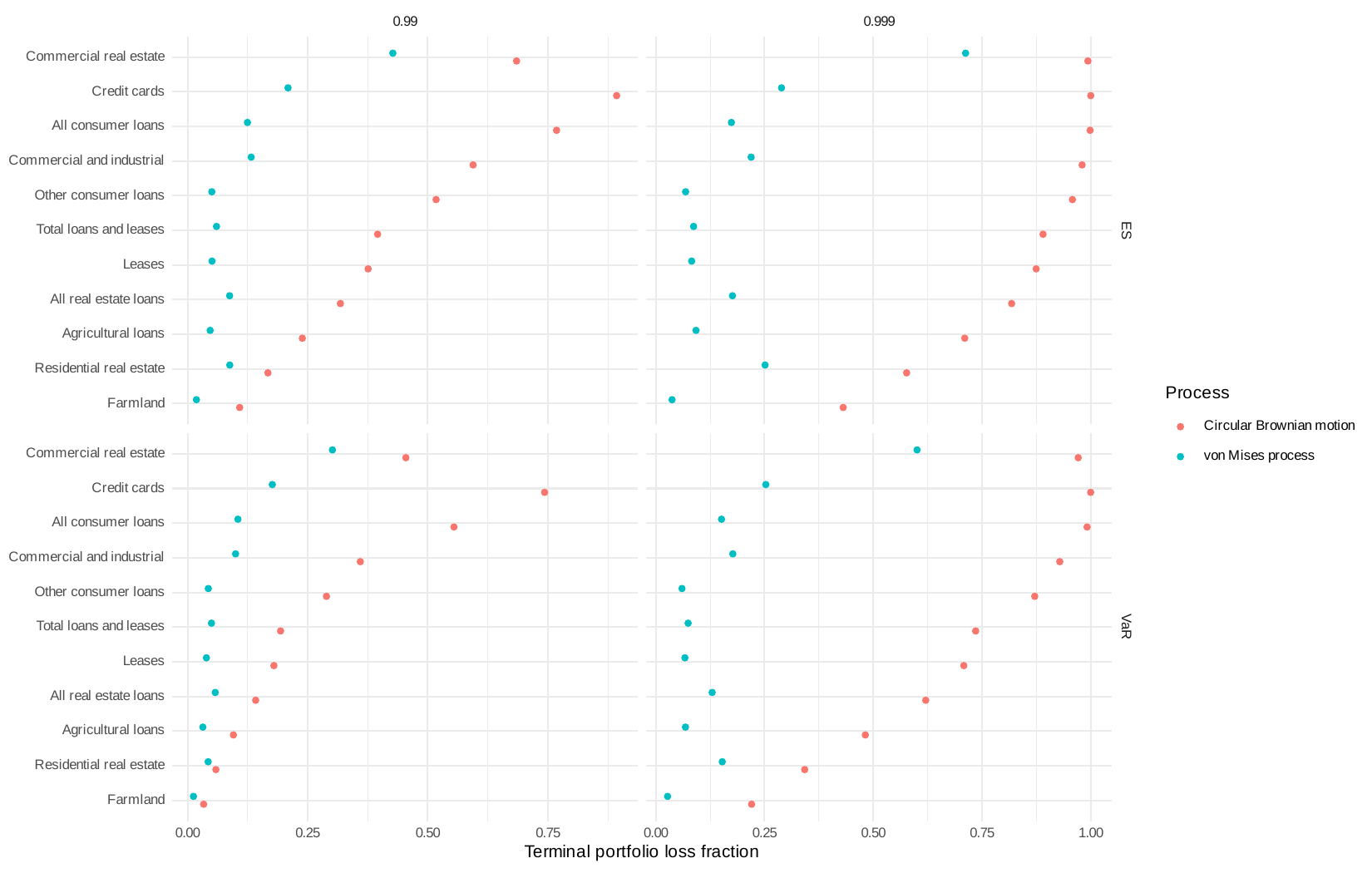}
\caption{Unconditional two-year terminal portfolio-loss VaR and expected
shortfall under the fitted Circular Brownian motion and von Mises processes.
Losses are infinite-granularity fractions with unit loss given default; panels
show the 99\% and 99.9\% levels.}
\label{fig:empirical-var-es}
\end{figure}

\section{Conclusion}\label{sec:conclusion}

This paper develops a bounded stochastic correlation extension of the Vasicek credit risk model by setting
\[
R_t=\cos^2(\varphi_t),
\]
with \(\varphi_t\) evolving as either Circular Brownian motion or a von Mises process. The construction enforces \(R_t\in[0,1]\), preserves positive semidefiniteness for portfolios of arbitrary size, and allows correlation to vary over time without sacrificing the one-factor structure.

A key feature of the framework is the separation between terminal and path-dependent credit quantities. Conditional terminal asset distributions depend on the correlation path only through
\[
\overline R_T=\frac{1}{T}\int_0^T R_s\,ds,
\]
so terminal loss distributions and joint-default probabilities can be evaluated by averaging standard expressions over the law of \(\overline R_T\). Joint survival, first-to-default, and joint first-passage probabilities instead depend on the ordered asset and correlation paths and are evaluated by direct simulation.

The simulations show that terminal joint default and joint first passage need not coincide and that higher initial correlation shifts probability toward concordant outcomes. It can raise joint default and joint first passage while also increasing joint survival and reducing first-to-default. The experiments also distinguish the effects of angular volatility, mean reversion, and stationary concentration on finite-horizon credit risk.

Future work could extend the framework to multivariate and sector-specific correlation structures. It may also be useful for multi-name credit products such as first-to-default and \(k\)th-to-default swaps, basket derivatives, CDO tranches, and counterparty-credit applications.

\section*{Declaration on AI usage}

The proof of the result in Section 3.2 was obtained after conversations with ChatGPT. Codex was used to implement the models and the estimation alogrithm presented in the paper. The authors have verified the details and remain responsible for the output.

\printbibliography
\end{document}